\newcommand{\poset}[1]{\ensuremath{{#1}}}
\newcommand{\lattice}[1]{\ensuremath{{#1}}}
\newcommand{\circuit}[1]{\ensuremath{#1}}
\newcommand{\family}[1]{\ensuremath{{#1}}}
\newcommand{\setunion}{\ensuremath{\cup}}
\newcommand{\sd}[2]{\class{SIZE\lower-0.12em\hbox{-}DEPTH}({#1}, {#2})}
\newcommand{\pleq}[1]{\ensuremath{\leq_\poset{P}}}
\newcommand{\var}[1]{\ensuremath{\mathit{#1}}}
\newcommand{\skewcc}{\class{SkewCC}}
\newcommand{\skewccvp}{\lang{SkewCCVP}}
\newcommand{\dgapp}{\lang{DGAP1'}}
\newcommand{\dgap}{\lang{DGAP1}}
\renewcommand{\L}{\class{LOG}}
\renewcommand{\NL}{\class{NLOG}}
\newcommand{\PiSkewCCVP}{\ensuremath{\lang{\Pi_i\lower-0.12em\hbox{--}SkewCCVP}}}
\newcommand{\CCVP}[2]
           {\ensuremath{(\poset{#1}, #2)\lower-0.12em\hbox{--}\lang{CCVP}}}
\newcommand{\PCC}[2]
           {\ensuremath{(\poset{#1}, #2)\lower-0.12em\hbox{--}\class{CC}}}
\newcommand{\PskewCC}[2]
           {\ensuremath{(\poset{#1}, #2)\lower-0.12em\hbox{--}\class{SkewCC}}}
\newcommand{\formula}[2]
           {\ensuremath{(\poset{#1}, #2)\lower-0.12em\hbox{--}\class{Formulae}}}
\newcommand{\XCC}[1]
           {\ensuremath{\poset{#1}\lower-0.12em\hbox{--}\class{CC}}}
\newcommand{\XSkewCC}[1]
           {\ensuremath{\poset{#1}\lower-0.12em\hbox{--}\class{SkewCC}}}
\newcommand{\FCC}[1]
           {\ensuremath{\class{\family{#1}\lower-0.12em\hbox{--}CC}}}
\newcommand{\smalllabel}[1]{{\tiny #1}}
\newtheorem{definition}{Definition}
\newtheorem{theorem}{Theorem}
\newtheorem{lemma}{Lemma}
\newtheorem{corollary}{Corollary}
\newtheorem{claim}{Claim}
\newtheorem{proposition}{Proposition}
\newtheorem{remark}{Remark}
\newcommand{\dist}[2]{{\tt DIST}_{#1, #2}}
\newcommand{\checkge}[1]{{\tt GE}_{#1}}
\newcommand{\checkgeb}[1]{{\tt GE'}_{#1}}
\newlength{\arrsize}  
\title{Comparator Circuits over Finite Bounded Posets}
\author{Balagopal Komarath\thanks{Supported by TCS PhD Fellowship} \hspace{1cm} Jayalal Sarma \hspace{1cm} K.S. Sunil \\[3mm]
{\large Department of Computer Science \& Engineering.} \\ 
{\large Indian Institute of Technology Madras, Chennai, India.} \\[1mm]
{\large Email: \{{\tt {baluks|jayalal|sunil}\}@cse.iitm.ac.in}}}
\begin{document}
\maketitle
\vspace{-5mm}

\begin{abstract}

  The comparator circuit model was originally introduced in \cite{MayrS92}
  (and further studied in \cite{Cook12}) to capture problems that are
  not known to be $\P$-complete but still not known to admit efficient
  parallel algorithms. The class $\CC$ is the complexity class of
  problems many-one logspace reducible to the Comparator Circuit Value
  Problem and we know that $\NL \subseteq \CC \subseteq \P$. Cook {\em
    et al.} \cite{Cook12} showed that $\CC$ is also the class of
  languages decided by polynomial size comparator circuit families.

  We study generalizations of the comparator circuit model that work
  over fixed finite bounded posets. We observe that there are
  universal comparator circuits even over arbitrary fixed finite
  bounded posets. Building on this, we show the following :
  \begin{itemize}
  \item Comparator circuits of polynomial size over fixed finite
    \textit{distributive} lattices characterize the class $\CC$. When
    the circuit is restricted to be skew, they characterize
    $\L$. Noting that (uniform) polynomial sized Boolean circuits
    (resp. skew) characterize $\P$ (resp. $\NL$), this indicates a
    comparison between $\P$ vs $\CC$ and $\NL$ vs $\L$ problems.
  \item Complementing this, we show that comparator circuits of
    polynomial size over arbitrary fixed finite lattices characterize
    the class $\P$ even when the comparator circuit is skew.
  \item In addition, we show a characterization of the class $\NP$ by
    a family of polynomial sized comparator circuits over fixed {\em
      finite bounded posets}. As an aside, we consider generalizations
    of Boolean formulae over arbitrary lattices. We show that Spira's
    theorem~\cite{Spira71} can be extended to this setting as well and
    show that polynomial sized Boolean formulae over finite fixed
    lattices capture the class $\NC^1$.
\end{itemize}
These results generalize results in \cite{Cook12} regarding the power
of comparator circuits. Our techniques involve design of comparator
circuits and finite posets. We then use known results from lattice
theory to show that the posets that we obtain can be embedded into
appropriate lattices. Our results give new methods to establish $\CC$
upper bounds for problems and also indicate potential new approaches
towards the problems $\P$ vs $\CC$ and $\NL$ vs $\L$ using lattice
theoretic methods.

\end{abstract}

\newpage

\section{Introduction}


Completeness for the class $\P$ for a problem is usually considered
to be evidence that it is hard to design an efficient parallel
algorithm for the problem. However, there are many computational
problems in the class $\P$, which are not known to be $\P$-complete,
yet designing efficient parallel algorithms for them has remained
elusive. Some of the classical examples of such problems include
lex-least maximal matching problem and stable marriage problem
\cite{MayrS92}.


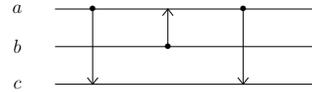
\begin{wrapfigure}{r}{0.4\textwidth}
\begin{center}
\scalebox{0.5}{%
\begin{tikzpicture}
\node (c1) at (1, 3) {\textbullet};
\node (c2) at (1, 1) {};
\draw [-biggertip] (c1.center) -- (c2.center);
\node (c3) at (3, 3) {};
\node (c4) at (3, 2) {\textbullet};
\draw [-biggertip] (c4.center) -- (c3.center);
\node (c7) at (5, 3) {\textbullet};
\node (c8) at (5, 1) {};
\draw [-biggertip] (c7.center) -- (c8.center);

\node (a) at (-1, 3) {\Large$a$};
\draw (0,3) -- ++(7, 0);
\node (b) at (-1, 2) {\Large$b$};
\draw (0,2) -- ++(7, 0);
\node (x) at (-1, 1) {\Large$c$};
\draw (0,1) -- ++(7, 0);
\end{tikzpicture}
}
\caption{A Comparator Circuit}
\label{fig:comparator}
\end{center}
\end{wrapfigure}

Attempting to capture the exact complexity of computation in these
problems using a variant of Boolean circuit model, Mayr and
Subramanian \cite{MayrS92} (see also \cite{Cook12}) studied the
comparator circuit model.  A comparator circuit is a sorting network
working over the values 0 and 1. A comparator gate has two inputs and
two outputs. The first output is the AND of the two inputs and the
second output is the OR of the two inputs. A comparator circuit is a
circuit that has only comparator gates. In particular, fan-out gates
are not allowed. Without loss of generality, we can assume that NOT gates are used only at
the input level. A graphical representation of a comparator circuit is
shown in Figure~\ref{fig:comparator}. In this representation, we draw
a set of parallel \emph{lines}. Each line carries a logical value
which is updated by gates incident on that line. Each gate is
represented by a directed arrow from one line (say $i$) to another
(say $j$) and the gate updates the values of lines as follows. The
value of line $i$ ($j$) is set to the AND (resp. OR) of values
previously on lines $i$ and $j$. The gates are evaluated from left to
right. The output of the circuit is the final value of a line
designated as the output line. We define the model formally in Section~\ref{sec:prelims}.

In order to study the complexity theoretic significance of comparator
circuits, the corresponding circuit value problem was explored in
\cite{MayrS92}. That is, given a comparator circuit and an input, test
if the output wire carries a 1 or not. The class \CC\ is defined in
\cite{MayrS92} as the class of languages that are logspace many-one
reducible to the comparator circuit value problem.  They also observed
that the class \CC\ is contained in \P.  Feder's algorithm (described
in \cite{Subr90}) for directed reachability proves that the class \CC\
contains \NL\ as a subclass. These are the best containments currently
known about the complexity class \CC.


There has been a recent spurt of activity in the characterization of
$\CC$. Cook et al.~\cite{Cook12} showed that the class $\CC$ is robust
even if the complexity of the many-one reduction to the comparator
circuit value problem is varied from $\AC^0$ to $\NL$.  They also gave
a characterization of the class \CC\ in terms of a computational model
(comparator circuit families). Their main contribution in this regard
is the introduction of a universal comparator circuit that can
simulate the computation of a comparator circuit given as input (to
the universal circuit).  Comparison of $\CC$ with the class $\NC$ has
interesting implications to the corresponding computational
restrictions. For example, hardness for the class $\CC$ is conjectured
to be evidence that the problem is not efficiently
parallelizable. This intuition was further strengthened by Cook et
al.~\cite{Cook12} by showing that there are oracle sets relative to
which $\CC$ and $\NC$ are incomparable (\NC\ is the class of all
languages efficiently solvable by parallel algorithms). In addition,
it is conjectured in \cite{Cook12} that the classes $\NC$, $\SC$ and
$\CC$ are pairwise incomparable.

\paragraph{\textbf{Our Results \& Techniques:}}
In this paper, we study the computational power of comparator circuits
working over arbitrary fixed finite bounded posets. Informally,
instead of 0 and 1, the values used throughout the computation could be any
element from the poset and the AND and OR gates compute maximal lower
bounds and minimal upper bounds over the poset respectively. We define
this model formally in section~\ref{sec:general}.  We obtain the
following results:

\begin{itemize}
\item There exist Universal Comparator Circuits for comparator
  circuits irrespective of the underlying bounded
  poset. (Proposition~\ref{prop:univ-cir}, Section~\ref{sec:general}.)

\item Comparator circuits of polynomial size over fixed finite
  distributive lattices capture the class \CC. (Theorem~\ref{thm:CC},
  Section~\ref{sec:P}). This leads to a new way to show that a problem
  is in the class $\CC$. That is, by designing a comparator circuit
  over a fixed finite lattice and then showing that the lattice is
  distributive. (An application of this method to design $\CC$
  algorithms for the stable matching problem can be found in
  \cite{MayrS92}. See also Section {6.2} in \cite{Cook12}). Since
  there are lattice theoretic techniques known (cf. $M_3$-$N_5$
  Theorem \cite{lattice-text}) for showing that a lattice is
  distributive, this alternate definition of the class $\CC$ using
  comparator circuits over distributive lattices might be of
  independent interest.

\item Going beyond distributivity, we show that comparator circuits of
  polynomial size over fixed finite lattices characterize the class
  \P. (Theorem~\ref{thm:P}, Section~\ref{sec:P}). In particular, we
  design a fixed finite poset $P$ over which, for any language $L \in
  \P$, there is a polynomial size comparator circuit family over $P$
  computing $L$. During computation, we only use lubs and glbs that
  exist in the poset $P$. This enables the use of Dedekind-MacNeille
  completion (DM completion) to construct a fixed finite
  lattice completing the poset $P$ while preserving existing lubs and
  glbs in the poset and that lattice can be used to perform all
  computations in \P. A potential drawback of the lattice thus
  obtained is that the complexity class captured by comparator
  circuits over it may vary depending on the element in the lattice
  used as the accepting element. By using standard tools from lattice
  theory, we derive that there is a fixed constant $i \ge 3$, such
  that comparator circuits over $\Pi_i$ (where $\Pi_i$ is the $i^{th}$
  partition lattice - see Section~\ref{sec:prelims} for a definition)
  with polynomial size can compute all functions in $\P$. Moreover, we
  show that comparator circuits over the lattice $\Pi_{i}$ capture
  $\P$ irrespective of the accepting element used.
  
  However, both partition lattices for $i \ge 3$ and the lattice given
  by DM completion are non-distributive. Exploring the possibility of
  another completion of the poset $P$ into a distributive lattice that
  preserves existing lubs and glbs (which will show $\P = \CC$), we
  arrive at the following negative result : the poset $P$ cannot be
  embedded into any distributive lattice while preserving all
  existing lubs and glbs. (Theorem~\ref{thm:impossibility}).
  
  It is conceivable that the class \P\ could be captured by a family
  of distributive lattices, while no finite fixed lattice capturing
  \P\ can be distributive. Motivated by this, we also present an
  analogue of the main theorem using growing posets of much simpler
  structure (See appendix~\ref{app:P-Growing}). However, we argue that
  this poset family also cannot be embedded into a family of
  distributive lattices while preserving all existing lubs and glbs.

\item Going beyond lattice structure, we show that comparator circuits
  over fixed finite bounded \textit{posets} capture the class
  \NP. (Theorem~\ref{thm:NP}, Section~\ref{sec:NP}). Here, we
  crucially use the fact that posets that are not lattices could have
  elements that do not have unique minimal upper bounds to simulate
  non-determinism. Hence, any completion of this poset into a lattice
  will fail to capture \NP, unless $\P = \NP$. Note, that the DM
  completion of this poset would not be able to characterize \NP\ as
  in the case of \P\ because the DM completion would introduce
  elements so that the elements in the poset that have non-unique
  minimal upper bounds and/or maximal lower bounds would end up having
  unique lubs and glbs.

\item Restricting the structure of the comparator circuit, we obtain
  an exact characterization of the class $\L$ using skew comparator
  circuits (Theorem~\ref{thm:L}). Noting that the polynomial sized
  skew Boolean circuits characterize exactly the class $\NL$, this
  leads to a comparison between $\CC$ vs $\P$ and $\NL$ vs $\L$
  problem : \textit{both problems address the power of polynomial size
    Boolean circuits vs comparator circuits in general and skew
    circuits respectively}.
\item We further study generalizations of skew comparator circuits to
  arbitrary lattices. When the lattice is distributive, it follows
  that the circuits capture exactly $\L$. Complementing this, we show
  that there are fixed finite lattices $P$ over which the skew
  comparator circuits characterize exactly
  $\P$.(Theorem~\ref{thm:SkewP}).
\item We study generalizations of Boolean formulas to arbitrary
  lattices where the AND and OR gates compute the $\land$ and $\lor$
  of the lattices. We generalize Spira's theorem~\cite{Spira71} to this
  setting and show that polynomial sized Boolean formulae over finite
  fixed lattices capture exactly $\NC^1$ (Theorem~\ref{thm:NC1}).
\end{itemize}

Thus, we observe that as the comparator circuit is allowed to compute
over progressively general structures (from distributive lattices to
arbitrary lattices to posets), the model captures classes of problems
that are progressively harder to parallelize (From \CC\ to \P\ to
$\NP$). The table below indicates the results (known results are
indicated by citations).
\begin{center}
\begin{tabular}{|l|c|c|c|c|}
\hline
\textbf{Lattices $\implies$} & \textbf{Boolean} & \textbf{Distributive} & \textbf{General} & \textbf{Posets} \\
\hline
$\poly$-sized & $\CC$(see \cite{Cook12}) & $\CC$ & $\P$ & $\NP$ \\
\hline
Skew, $\poly$-sized  & $\L$ & $\L$ & $\P$ & - \\
\hline
Formulae & $\NC^1$(see \cite{Spira71}) & $\NC^1$ & $\NC^1$ & \\
\hline
\end{tabular}
\end{center}

The main technical contribution in our proofs is the design of posets
and the corresponding comparator circuits for capturing complexity
classes. We then use known ideas from lattice and order theory in
order to derive lattices to which the constructed posets can
be embedded.

\section{Preliminaries}
\label{sec:prelims}
The standard definitions in complexity theory used in this paper can
be found in standard textbooks \cite{arora-barak}. All reductions in
this paper are computable in logspace unless mentioned otherwise. By
(standard) Boolean circuits, we mean circuits over the basis $\{ \vee,
\wedge \}$ where NOT gates are only allowed at the input level. In
this section, we define comparator circuits, certain restrictions on
comparator circuits and complexity classes based on those
restrictions.

A \emph{comparator circuit} has a set of $n$ \emph{lines} $\{ w_1,
\ldots ,w_n \}$ and an ordered list of gates $(w_i, w_j)$. Each line
can be fed as input a value that is either (Boolean) 0 or 1. We define
$val(w_i)$ to be the value of the line $w_i$. Each gate $(w_i, w_j)$
updates $val(w_i)$ to $val(w_i) \wedge val(w_j)$ and $val(w_j)$ to
$val(w_i) \vee val(w_j)$ in order. After all gates have updated the
values, the value of the line $w_1$ is the output of the circuit.

The {\sc Comparator Circuit Value problem} is: Given $(C, x)$ as input
find the output of the comparator circuit \circuit{C} when fed $x$ as
input.  We can think of $C$ being encoded according to the above
definition of comparator circuits. We call this the ordered list
representation as the gates are presented as an ordered list. Mayr and
Subramanian~\cite{MayrS92} defined the complexity class \CC\ as the
set of all languages logspace reducible to the Comparator Circuit
Value problem. Cook et al.~\cite{Cook12} characterized the class
\CC\ as languages computed by $\AC^0$-uniform families of annotated
comparator circuits. In an annotated comparator circuit the initial
value of a line could be an input variable $x_i$ or its complement
$\overline{x_i}$. In a family of annotated comparator circuits for a
language \lang{L}, the $n^{th}$ comparator circuit in the family has
exactly $n$ input variables ($x_1, \ldots ,x_n$) and the circuit
computes $\lang{L} \cap {\{0, 1\}}^{n}$.

\paragraph{Skew Comparator Circuits:} 
We now define skewness in comparator circuits. To begin with, we
present an alternate definition of comparator circuits that is closer
to the definition of standard Boolean circuits. A comparator gate is a
2-input, 2-output gate that takes $a$ and $b$ as inputs and outputs $a
\wedge b$ and $a \vee b$. Then the comparator circuit is simply a
circuit (in the usual sense) that consists of only comparator gates
(In particular, fan-out gates are not allowed). Using this definition,
we can encode comparator circuits by using DAGs as we encode standard
Boolean circuits. It is easy to see that given a comparator circuit
encoded as an ordered list of gates, we can obtain the DAG encoding
the comparator circuit in logspace. Using this definition, we can talk
about \emph{wires} in the comparator circuit.

We say that an AND gate in a comparator gate is \textit{used} if the
AND output wire of that comparator gate has a path, through comparator
gates, to the output wire. An AND gate in the circuit is called
\emph{skew} if and only if at least one input to that gate is the
constant 0 or the constant 1 or (in the case of annotated circuits) an
input bit $x_i$ or $\overline{x_i}$ for some $i$.

A comparator circuit is called a \emph{skew comparator circuit} if and
only if all used AND gates in the circuit are \emph{skew}.
%
The complexity class \skewcc\ consists of all languages that can be
decided by poly-size skew comparator circuit families.
%
We define \skewccvp\ to be the circuit evaluation problem for skew
comparator circuits. Note that given the ordered list representation
of a comparator circuit, it is easy to check whether an AND gate is
used or not. For ex., if the $i^{\hbox{th}}$ gate is $(w_{1}, w_{2})$,
then the AND output of this gate is unused if and only if there is no
element in the list of gates with $w_{1}$ as a member at a position
greater that $i$ in the list and if the AND output of this gate is not
the output wire.

The circuit family is $\L$-uniform if and only if there exists a TM
$\lang{M}$ that outputs the $n^\textrm{th}$ circuit in the family in
$O(\log(n))$ space given $1^{n}$ as input. All circuits in this paper
are $\L$-uniform unless mentioned otherwise.


\paragraph{Lattice and Order Theory:}
We include some basic definitions and terminology from standard
lattice and order theory that are required later in the paper. A more
detailed treatment can be found in standard textbooks
\cite{lattice-text}. 

A set \poset{P} along with a reflexive, anti-symmetric and transitive
relation denoted by $\leq_{\poset{P}}$ is called a {\em poset}. An
element $m \in \poset{P}$ is called the {\em greatest element} if $x
\leq m$ for all $x$ in \poset{P}.  An element $m \in \poset{P}$ is
called the {\em least element} if $m \leq x$ for all $x$ in \poset{P}.
A poset is called {\em bounded} if it has a greatest and a least
element. Note that any finite poset can be converted into a finite
bounded poset by adding two new elements 0 and 1 and adding the
relations $m \leq 1$ and $0 \leq m$ for every element $m$ in the
poset.  {\em Minimal upper bounds} of two elements $x$, $y$ in
\poset{P}, denoted by $x \vee y$, is the set of all $m \in \poset{P}$
such that $x \leq m$, $y \leq m$ and there exists no $m'$ distinct
from $m$ such that $x \leq m'$, $y \leq m'$ and $m' \leq m$.  {\em
  Maximal lower bounds} of two elements $x$, $y$ in \poset{P}, denoted
by $x \wedge y$, is the set of all $m \in \poset{P}$ such that $m \leq
x$, $m \leq y$ and there exists no $m'$ distinct from $m$ such that
$m' \leq x$, $m' \leq y$ and $m \leq m'$.  A poset \poset{P} is called
a {\em lattice} if every pair of elements $x$ and $y$ has a unique
maximal lower bound and a unique minimal upper bound. In a lattice,
the minimal upper bound (maximal lower bound) of two elements is also
known as the {\em join} ({\em meet}). Since minimal upper bound and
maximal lower bound are unique in a lattice, we drop the set notation
when describing them, i.e., instead of writing $a \vee b = \{ x \}$,
we simply write $a \vee b = x$.  A lattice \lattice{L} is called {\em
  distributive} if for every elements $a$, $b$, $c$ $\in \lattice{L}$
we have $a \vee (b \wedge c) = (a \vee b) \wedge (a \vee c)$.  An
\emph{order embedding} of a poset \poset{P} into a poset \poset{P'} is
a function $f : \poset{P} \mapsto \poset{P'}$ such that $f(x)
\leq_{\poset{P'}} f(y) \iff x \leq_{\poset{P}} y$. We say that the
lattice $L$ is a sub-lattice of $L'$ if $L \subseteq L'$ and $L$ is
also a lattice under the meet and join operations inherited from
$L'$. In this case, we say that \emph{$L'$ embeds $L$}.

We now state some technical theorems from the theory which we
crucially use. The following theorem shows that given a poset one can
find a lattice that contains the poset.

\begin{theorem}[Dedekind-Macneille Completion\cite{lattice-text}]
For any poset \poset{P}, there always exists a smallest lattice
\lattice{L} that order embeds \poset{P}. This lattice \lattice{L} is
called the {\em Dedekind-MacNeille completion} of \poset{P}.
\label{thm:dmcompletion}
\end{theorem}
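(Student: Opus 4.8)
The plan is to realize \lattice{L} concretely as the lattice of \emph{cuts} of \poset{P}, via the classical polarity (Galois connection) construction, and then to verify the embedding and the minimality separately. For a subset $A \subseteq \poset{P}$ I define its set of upper bounds $A^{u} = \{ x \in \poset{P} : a \leq x \textrm{ for all } a \in A \}$ and its set of lower bounds $A^{\ell} = \{ x \in \poset{P} : x \leq a \textrm{ for all } a \in A \}$. The maps $A \mapsto A^{u}$ and $A \mapsto A^{\ell}$ are both order-reversing, so the composite $\gamma(A) = (A^{u})^{\ell}$ is order-preserving. The first step is to check that $\gamma$ is a \emph{closure operator}: it is extensive ($A \subseteq \gamma(A)$), monotone, and idempotent, where idempotence reduces to the standard identity $A^{u \ell u} = A^{u}$. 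Everything downstream rests on these three properties.

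Next I take \lattice{L} to be the family of \emph{closed} sets $\{ A \subseteq \poset{P} : \gamma(A) = A \}$, ordered by inclusion, and show it is a complete lattice. The meet of a family of closed sets is their intersection, since an arbitrary intersection of closed sets is again closed (immediate from monotonicity of $\gamma$); the join is $\gamma$ applied to their union. In particular every pair of closed sets has a unique meet and join, so \lattice{L} is a lattice. The candidate embedding is $e : \poset{P} \to \lattice{L}$ with $e(x) = \mathord{\downarrow} x = \{ y \in \poset{P} : y \leq x \}$; one checks each principal down-set is closed, since $\mathord{\downarrow} x = \gamma(\{x\})$, and that $e(x) \subseteq e(y) \iff x \leq y$, so $e$ is an order embedding.

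For minimality I would establish that $e(\poset{P})$ is both \emph{join-dense} and \emph{meet-dense} in \lattice{L}: every cut $A$ is the join of the principal ideals it contains and the meet of the principal filters containing it. Given any lattice \lattice{L'} into which \poset{P} order-embeds, one extends the embedding to \lattice{L} by sending a cut to the join in \lattice{L'} of the images of its elements; join- and meet-density guarantee this map is well-defined and order-preserving, which is the precise sense in which \lattice{L} is the smallest lattice order-embedding \poset{P}. Since the only case we need is \poset{P} finite, the completion \lattice{L} is itself finite and the distinction between ``lattice'' and ``complete lattice'' in the general statement disappears.

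The main obstacle is this minimality/universality claim rather than the construction: making precise that \lattice{L} is \emph{smallest}, and verifying that the extension map into an arbitrary \lattice{L'} is well-defined and monotone, is exactly where the join- and meet-density of $e(\poset{P})$ does the work. For our later applications the key consequence to extract is that $e$ \emph{preserves all existing pairwise meets and joins of \poset{P}}: whenever $x \vee y$ exists as a genuine least upper bound in \poset{P}, it agrees with the join of $e(x)$ and $e(y)$ in \lattice{L}, and dually for meets. This follows by comparing the cut $\gamma(e(x) \cup e(y))$ with the principal ideal of the existing least upper bound, and it is precisely the property that later lets us complete \poset{P} to a lattice without disturbing the lubs and glbs already used by our comparator circuits.
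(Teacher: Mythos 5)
The paper itself never proves Theorem~\ref{thm:dmcompletion}: it is imported from \cite{lattice-text} and used as a black box, so there is no internal proof to compare against. Your construction is exactly the classical one from that literature --- cuts $A = (A^{u})^{\ell}$ as elements, intersection as meet, $\gamma$ of the union as join, and $x \mapsto \mathord{\downarrow}x$ as the embedding. The closure-operator verification, the completeness of the lattice of closed sets, the check that principal down-sets are cuts, and the preservation of existing pairwise meets and joins (via $(\mathord{\downarrow}x \cup \mathord{\downarrow}y)^{u} = \mathord{\uparrow}x \cap \mathord{\uparrow}y = \mathord{\uparrow}(x \vee y)$ whenever $x \vee y$ exists, and dually $\mathord{\downarrow}x \cap \mathord{\downarrow}y = \mathord{\downarrow}(x \wedge y)$) are all correct; that last point is precisely the property the paper relies on in Lemma~\ref{lem:P}.

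The one step you must tighten is minimality. You assert that the extension $g(A) = \bigvee_{L'} f(A)$ being ``well-defined and order-preserving'' is ``the precise sense in which \lattice{L} is smallest.'' Order-preservation alone is vacuous: every poset admits an order-preserving map onto a one-element lattice, so it cannot certify smallness of anything. What makes \lattice{L} smallest is that $g$ is an order \emph{embedding} of \lattice{L} into \lattice{L'}, and this is where the cut structure (not density per se) does the work: if $g(A) \leq g(B)$ for cuts $A, B$, then for every $a \in A$ and every upper bound $u$ of $B$ in \poset{P} we have $f(a) \leq g(A) \leq g(B) \leq \bigvee f(B) \leq f(u)$, hence $a \leq u$ because $f$ reflects order, hence $a \in (B^{u})^{\ell} = B$; so $g(A) \leq g(B)$ implies $A \subseteq B$, giving injectivity and order-reflection simultaneously. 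Together with $g(\mathord{\downarrow}x) = f(x)$ (so $g$ genuinely extends $f$), this closes the argument. With that patch your proof is complete, and, as you note, in the finite case used throughout the paper the joins $\bigvee_{L'} f(A)$ are finite, so \lattice{L'} need only be a lattice rather than a complete lattice.
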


One crucial property of Dedekind-MacNeille completion is that it
preserves all meets and joins that exist in the poset, i.e., if $a$
and $b$ are two elements in the poset and $a \vee b = x$ in the poset,
then we have $f(a) \vee f(b) = f(x)$ in the Dedekind-MacNeille
completion of the poset, where $f$ is the embedding function that maps
elements in $P$ to elements in $L$.

We now state a very important theorem that concerns the structure of
distributive lattices.

\begin{theorem}[Birkhoff's Representation Theorem\cite{lattice-text}]
The elements of any finite distributive lattice can be represented as
finite sets, in such a way that the join and meet operations over the
finite distributive lattice correspond to unions and intersections of
the finite sets used to represent those elements.
\label{thm:birkhoff}
\end{theorem}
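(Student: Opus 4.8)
The plan is to realize the desired set representation through the \emph{join-irreducible} elements of the lattice. Call an element $p \in L$ join-irreducible if it is not the least element and whenever $p = a \vee b$ one has $p = a$ or $p = b$; write $J(L)$ for the set of all such elements. I would use the map sending each $a \in L$ to the finite set
$$\phi(a) = \{ p \in J(L) : p \leq a \}.$$
The goal is to show that $\phi$ is injective and satisfies $\phi(a \vee b) = \phi(a) \cup \phi(b)$ and $\phi(a \wedge b) = \phi(a) \cap \phi(b)$, so that $L$ is isomorphic to the family $\{ \phi(a) : a \in L \}$ of finite sets under union and intersection, which is exactly the statement to be proved.

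First I would establish, using only finiteness, that every element equals the join of the join-irreducibles beneath it, i.e. $a = \bigvee \phi(a)$. This follows by induction over the poset: if $a$ is the least element or itself join-irreducible the claim is immediate, and otherwise one writes $a = b \vee c$ with $b, c < a$, applies the inductive hypothesis to $b$ and $c$, and observes that $\phi(b) \cup \phi(c) \subseteq \phi(a)$ while $a$ is already the join of $\phi(b) \cup \phi(c)$. This identity delivers injectivity at once: if $\phi(a) = \phi(b)$ then $a = \bigvee \phi(a) = \bigvee \phi(b) = b$. It also yields the order-embedding property $a \leq b \iff \phi(a) \subseteq \phi(b)$.

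Next I would verify the two homomorphism identities. The meet identity $\phi(a \wedge b) = \phi(a) \cap \phi(b)$ holds in any lattice, since $p \leq a \wedge b$ iff $p \leq a$ and $p \leq b$. The join identity is the crux and is where distributivity enters. The inclusion $\phi(a) \cup \phi(b) \subseteq \phi(a \vee b)$ is trivial, so the content lies in the reverse direction: given a join-irreducible $p \leq a \vee b$, distributivity gives $p = p \wedge (a \vee b) = (p \wedge a) \vee (p \wedge b)$, and join-irreducibility of $p$ then forces $p = p \wedge a$ or $p = p \wedge b$, that is $p \leq a$ or $p \leq b$, so $p \in \phi(a) \cup \phi(b)$.

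The main obstacle is exactly this last step, since it is the only place distributivity is genuinely required and cannot be dispensed with: in a non-distributive lattice such as $M_3$ or $N_5$ a join-irreducible element can sit below a join without lying below either joinand, and then $\phi$ fails to send joins to unions. The only remaining bookkeeping is to note that the two identities show the image $\{ \phi(a) : a \in L \}$ is closed under $\cup$ and $\cap$, so it is a bona fide lattice of finite sets isomorphic to $L$, completing the representation.
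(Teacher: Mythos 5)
Your proof is correct: the representation via join-irreducibles, the identity $a = \bigvee \phi(a)$ established by induction, and the use of distributivity exactly once (to split a join-irreducible $p \leq a \vee b$ below one of the joinands) constitute the standard argument for Birkhoff's theorem in the finite case. The paper itself offers no proof --- it cites the result from a lattice theory text --- and the argument you give is precisely the canonical one found there, so there is nothing substantive to compare.
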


The $n^{\hbox{th}}$ \emph{partition lattice} for $n \geq 2$, denoted
$\Pi_{n}$, is the lattice where elements are partitions of the set
$\{1, \ldots ,n\}$ ordered by refinement. Equivalently, the elements
are equivalence relations on the set $\{1, \ldots ,n\}$ where the glb
is the intersection and lub is the transitive closure of the union.

\begin{theorem}[Pudl{\'a}k, T{\r{u}}ma\cite{pudlak80}]
  For any finite lattice $L$, there exists an $i$ such that $L$ can be
  embedded as a sublattice in $\Pi_{i}$.
\label{thm:pudlak}
\end{theorem}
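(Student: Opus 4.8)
The plan is to prove this as a finite, size-controlled strengthening of Whitman's classical embedding theorem, which already gives that every lattice embeds as a sublattice of the lattice $\mathrm{Eq}(S)$ of equivalence relations on some set $S$ (ordered by inclusion, with meet $=$ intersection and join $=$ transitive closure of the union). Concretely, I would seek an injective map $a \mapsto \theta_a$ from $L$ into $\mathrm{Eq}(S)$ satisfying $\theta_{a \wedge b} = \theta_a \cap \theta_b$ and $\theta_{a \vee b} = \theta_a \vee \theta_b$ (the transitive closure of $\theta_a \cup \theta_b$), together with $a \le b \iff \theta_a \subseteq \theta_b$, and then argue that when $L$ is finite the set $S$ can be taken finite; taking $i = |S|$ and identifying $\mathrm{Eq}(\{1,\dots,i\})$ with $\Pi_i$ then finishes the proof.

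The first observation is that meets are essentially free: the intersection of equivalence relations is again an equivalence relation, so $\theta_{a \wedge b} = \theta_a \cap \theta_b$ holds automatically once the $\theta_a$ respect the order, and $a \le b \iff \theta_a \subseteq \theta_b$ handles injectivity and the order-embedding. All the difficulty is concentrated in the join. The join in a partition lattice is a transitive closure, so a pair $(x,y)$ lies in $\theta_a \vee \theta_b$ exactly when there is an \emph{alternating path} $x = z_0, z_1, \dots, z_k = y$ whose consecutive points are related alternately by $\theta_a$ and $\theta_b$. I must therefore (i) \emph{force} enough such paths so that $\theta_a \vee \theta_b \supseteq \theta_{a \vee b}$, and, much more delicately, (ii) \emph{forbid} spurious paths so that $\theta_a \vee \theta_b \subseteq \theta_{a \vee b}$ --- that is, prevent the transitive closure from collapsing points that should stay separated.

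My construction would proceed by iterated amalgamation over a finite set. I would start from a minimal finite partial representation (for instance on a set indexed by the elements, or by the covering pairs, of $L$) and then repeatedly repair the two defects: whenever a required relation $x \mathrel{\theta_{a \vee b}} y$ is not yet witnessed, I splice in a fresh, bounded-length alternating gadget through new points to create exactly that witness; and whenever the current structure threatens an unwanted collapse, I keep the offending points distinct by choosing the gadgets so that no new alternating path connects elements not already required to be joined. The crux is an extension/amalgamation lemma asserting that each repair adds only finitely many points and preserves all previously established relations \emph{and} non-relations. Because $L$ is finite there are only finitely many pairs $(a,b)$, hence only finitely many join-constraints of each type, and a careful accounting of how many repair rounds are needed --- and that later rounds do not reopen earlier constraints --- shows the process terminates with $S$ finite.

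I expect the overwhelming bulk of the difficulty, and precisely what separates this finite statement from Whitman's infinite theorem, to lie in the interaction governing step (ii): the transitive-closure operation is global and non-local, so adding a gadget to realize one join can, through unintended alternating paths, accidentally identify points elsewhere and thereby destroy a meet or collapse two distinct lattice elements. Controlling this runaway propagation of the transitive closure while keeping the total number of added points \emph{finite} is the heart of the Pudl\'ak--T\r{u}ma argument; the termination bound, rather than the mere existence of some (possibly infinite) representation, is the real obstacle.
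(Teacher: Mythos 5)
The paper does not prove this statement at all: it is the Pudl\'ak--T\r{u}ma theorem, imported from \cite{pudlak80} as a known (and famously difficult) result, so the only question is whether your proposal would stand on its own as a proof. It would not. What you have written is an accurate description of \emph{what makes the theorem hard}, but every point where the difficulty actually lives is deferred to an unproven assertion: the ``extension/amalgamation lemma asserting that each repair adds only finitely many points and preserves all previously established relations and non-relations'' is exactly the theorem in disguise, and the ``careful accounting'' that later repair rounds do not reopen earlier constraints is stated, not argued. A proof outline whose core lemma is the full statement being proved has not reduced the problem.

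Moreover, the specific repair strategy you describe fails for a concrete reason. Your $\theta_a$ must be equivalence relations on all of $S$, so every fresh point introduced by a gadget for one join-constraint $(a,b)$ must be placed into some block of $\theta_c$ for \emph{every} $c \in L$; these placements create new pairs that are required to be (or forbidden from being) joined under other pairs $(c,d)$, so each repair round generically spawns new unmet constraints rather than strictly shrinking the constraint set. Iterating this repair forever is precisely Whitman's classical construction, and it is why Whitman obtains only a countably infinite $S$. The finite case is not a ``size-controlled strengthening'' obtained by bookkeeping on top of that process; it resisted solution for decades precisely because no such termination argument exists for the naive iteration, and Pudl\'ak and T\r{u}ma had to build a genuinely different, intricate combinatorial construction to bound $S$. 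Since that construction (or any substitute for it) is absent from your proposal, there is a genuine gap: the proposal identifies the obstacle but does not overcome it. For the purposes of this paper, the correct move is the one the authors make --- cite \cite{pudlak80} --- rather than attempt a self-contained proof.
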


We can describe elements of the partition lattice $\Pi_{n}$ by using
undirected graphs on the vertex set $\{1, \ldots ,n\}$. Given an
undirected graph $G = (\{1, \ldots ,n\}, E)$, the corresponding
element $A_{G} \in \Pi_{n}$ is the equivalence relation $A_{G} = \{(i,
j) : j \text{ is reachable from } i \text{ in } G \}$. We may choose 
transitively closed graphs (disjoint union of cliques) as the
canonical representation for elements of partition lattices.
Figure~\ref{fig:part_lattice} shows the lattice $\Pi_4$ and two
undirected graphs representing two different elements in $\Pi_4$.

\begin{figure}
\centering
\includegraphics[scale=0.5]{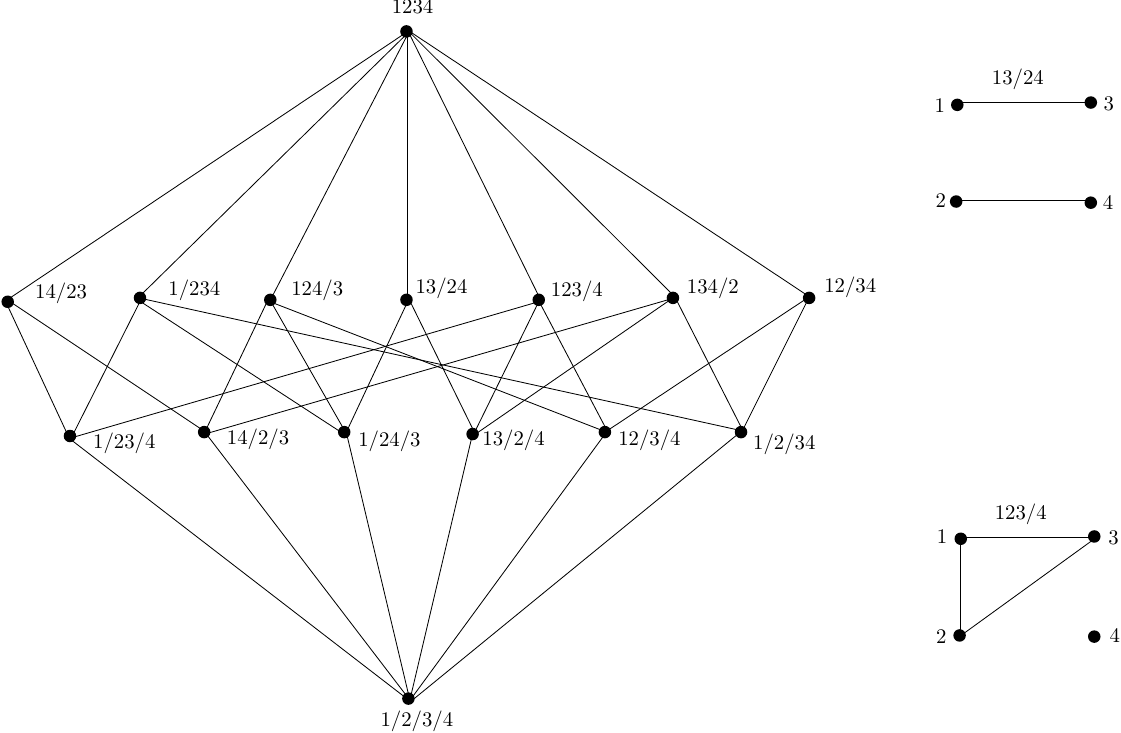}
\caption{The lattice $\Pi_4$ and the undirected graph representation of $13/24$ and $123/4$}
\label{fig:part_lattice}
\end{figure}

\paragraph{Some Relations in Partition Lattices:}
A formula over a lattice is defined analogously to a Boolean
formula. The Boolean AND and OR operations are generalized to glb and
lub operations of the lattice and the formula may contain elements of
the lattice as constants (Similar to Boolean values 0 and 1 in a
Boolean formula). In this section, we prove the existence of a certain
formula over partition lattices. The following statements
hold\footnote{Since we have not seen them explicitly in the
  literature, we include the proofs in this paper.}  for any partition
lattice $\Pi_{i}$ where $i \geq 2$. In the following propositions, the
element 0 refers to the least element of the lattice and the element 1
refers to the greatest element of the lattice.

\begin{proposition}
  For any $A, B \in \Pi_{i}$ such that $A \not\leq B$,
  there exists a formula $\dist{A}{B}$ over $\Pi_{i}$ such that
  $\dist{A}{B}(x) = 1$ if $x = A$ and strictly less than $1$ if $x =
  B$.
  \label{prop:dist}
\end{proposition}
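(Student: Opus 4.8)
The plan is to exploit $A \not\leq B$ to find a single pair of ground elements that is identified by $A$ but separated by $B$, and then build a two-level lattice formula that first \emph{probes} for this pair using a meet and then \emph{amplifies} the outcome up to the top element using a join. Since the order on $\Pi_i$ is refinement of equivalence relations, equivalently containment of their pair-sets, $A \not\leq B$ says that the equivalence relation $A$ is not contained in $B$. Hence there exist $p, q \in \{1, \ldots, i\}$ with $p \sim_A q$ but $p \not\sim_B q$. I would fix such a pair $(p,q)$ once and for all.

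First I would introduce the constant $C \in \Pi_{i}$ whose only non-singleton block is $\{p, q\}$ (in the graph representation, the single edge $pq$). Because meet in $\Pi_{i}$ is intersection of equivalence relations, the subformula $x \wedge C$ acts as a probe for the pair $(p,q)$: it equals $C$ exactly when $x$ relates $p$ and $q$, and equals the bottom element $0$ (the all-singletons partition) otherwise. In particular $A \wedge C = C$ while $B \wedge C = 0$. Next I would choose a second constant $D \in \Pi_{i}$: take any spanning tree of the complete graph on $\{1, \ldots, i\}$ that uses $pq$ as one of its edges, and let $D$ be that tree with the edge $pq$ deleted. Then $D$ has exactly two blocks, namely the two components of the broken tree, so $D < 1$, whereas $C \vee D$, being the transitive closure of the union that restores the full spanning tree, is connected and hence equal to the top element $1$. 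The claimed formula is then
\[
  \dist{A}{B}(x) \;=\; (x \wedge C) \vee D,
\]
for which $\dist{A}{B}(A) = (A \wedge C) \vee D = C \vee D = 1$ and $\dist{A}{B}(B) = (B \wedge C) \vee D = 0 \vee D = D < 1$, exactly as required.

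The verification is essentially routine once the two constants are chosen, so the only real point to get right is the design of $D$: it must become connected the instant the edge $pq$ is added yet remain disconnected on its own. A spanning tree minus one edge guarantees both properties at once, since deleting any single edge of a tree splits it into exactly two components that the missing edge reconnects. I would also check the degenerate case $i = 2$, where $A \not\leq B$ forces $A = 1$ and $B = 0$; there $C = 1$ and $D = 0$, the formula collapses to $\dist{A}{B}(x) = x$, and the statement holds trivially. Thus the main obstacle is conceptual rather than computational: recognizing that a single separating pair already suffices, and that a meet followed by a join lets one isolate that pair and then blow it up to the top of $\Pi_{i}$.
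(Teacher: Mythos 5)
Your proof is correct, but it takes a genuinely different route from the paper's. The paper splits into two cases: when $A > B$ it joins $x$ with a constant $P$ representing a path through one representative vertex of each block of $A$ (so $A \vee P = 1$ while $B \vee P < 1$, because some block of $A$ splits into several blocks of $B$); when $A$ and $B$ are incomparable it first applies $g(x) = x \vee B_{1} \vee \cdots \vee B_{m}$, where the $B_{j}$ are the edges of $B \setminus A$, which fixes $B$ but lifts $A$ strictly above $B$, and then reduces to the first case via $\dist{A}{B}(x) = \dist{g(A)}{B}(g(x))$. You instead exploit the order-theoretic meaning of $A \not\leq B$ directly: it yields a single pair $(p,q)$ related by $A$ but not by $B$, and your depth-two formula $(x \wedge C) \vee D$ --- with $C$ the one-edge partition on $\{p,q\}$ and $D$ a spanning tree with exactly that edge deleted --- separates the two points with no case analysis and no recursion. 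Your construction is more uniform and of constant size regardless of how $A$ and $B$ relate; it is also monotone in $x$ (a meet and a join with constants), so it evaluates to $1$ on every $x \geq A$, which is precisely the property needed when these formulas are later aggregated into $\checkge{A}(x) = \bigwedge_{B \not\geq A} \dist{A}{B}(x)$, so your version can be substituted into the subsequent proposition without modification. The paper's argument, working with the edges of $B \setminus A$ and with block representatives of $A$, is forced into the comparable/incomparable dichotomy that your separating-pair observation sidesteps.
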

\begin{proof}
  There are two cases to consider.
  Case when [$A > B$]: Let $P \in \Pi_{i}$ be the element corresponding to
    a path with exactly one vertex from each partition in $A$. We
    define $\dist{A}{B}(x) = x \vee P$.
    Case when [$A \not> B$]: Let $e_{1}, \ldots ,e_{m}$ be the edges in
    $B \backslash A$ (using canonical representation) and let $B_{i}$
    denote the element in the partition lattice that correspond to the
    undirected graph having only the edge $e_{i}$. Let $g(x) = x \vee
    B_{1} \vee \ldots \vee B_{m}$. We have $g(A) > g(B) = B$. Then
    define $\dist{A}{B}(x) = \dist{g(A)}{B}(g(x))$.
\end{proof}

\begin{proposition}
  For any $A \in \Pi_{i}$, there exists a formula $\checkge{A}(x)$
  that is 1 iff $x \geq A$. In addition, there exists a formula
  $\checkgeb{A}(x)$ that evaluates to 1 if $x \geq A$ and evaluates to
  0 otherwise.
  \label{prop:checkgeb}
\end{proposition}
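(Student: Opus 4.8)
The plan is to read $\checkge{A}$ directly off the distance formulas of the previous proposition, and then to upgrade it to the genuinely two-valued $\checkgeb{A}$. For $\checkge{A}$ I would use that $x \geq A$ holds iff $x \wedge A = A$, whereas $x \wedge A$ always lies weakly below $A$. So I would set
\[
  \checkge{A}(x) \;=\; \bigwedge_{\substack{B \in \Pi_{i}\, :\, A \not\leq B}} \dist{A}{B}\big(x \wedge A\big),
\]
the meet ranging over exactly those $B$ for which $\dist{A}{B}$ is available. If $x \geq A$ then $x \wedge A = A$, every factor equals $\dist{A}{B}(A) = 1$, and the meet is $1$. If $x \not\geq A$ then $A' = x \wedge A$ satisfies $A' < A$, hence $A \not\leq A'$, so $A'$ is itself one of the indices $B$; the corresponding factor $\dist{A}{A'}(A') < 1$ pulls the whole meet strictly below $1$. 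Since $\Pi_{i}$ is fixed this is a constant-size formula, settling the first half.

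For $\checkgeb{A}$ the extra demand is to output the least element $0$ on rejection. I would reduce to single edges: $x \geq A$ iff $p$ and $q$ share a block of $x$ for every pair $p,q$ sharing a block of $A$, so it suffices to build, for each such pair, a detector that is $1$ exactly when $p \sim_{x} q$, and then meet them. To build the per-pair detector I would localize inside a non-distributive piece of $\Pi_{i}$: choosing a third index $r$, the interval $[\,0,\,t_{pqr}\,]$ below the partition $t_{pqr}$ whose only nontrivial block is $\{p,q,r\}$ is a sublattice isomorphic to $\Pi_{3}=M_{3}$, with atoms the single-edge partitions $a_{pq},a_{pr},a_{qr}$; its bottom is the global $0$. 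The projection $\pi = x \wedge t_{pqr}$ lands in this interval and preserves whether $p \sim_{x} q$, and on it the rotation gadget
\[
  \big( (\pi \wedge a_{pq}) \vee ((\pi \vee a_{pr}) \wedge a_{qr}) \big) \;\wedge\; \big( (\pi \wedge a_{pq}) \vee ((\pi \vee a_{qr}) \wedge a_{pr}) \big)
\]
sends both $a_{pr}$ and $a_{qr}$ to $0$ while fixing $a_{pq}$ and $t_{pqr}$ at the interval top, hence equals $t_{pqr}$ when $p \sim_{x} q$ and $0$ otherwise. Meeting these over the co-blocked pairs of $A$ yields a formula that is $0$ on rejection and equals a fixed element $m_{A} < 1$ on acceptance. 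A final \emph{lift} of the form
\[
  \phi(y) \;=\; \bigvee_{j} \big( (y \vee c_{j}) \wedge d_{j} \big), \qquad c_{j} \wedge d_{j} = 0 ,
\]
with the $c_{j}$ chosen to supply the missing edges and the $d_{j}$ disjoint from them, then raises $m_{A}$ to the global $1$ while keeping $0$ fixed; composing $\phi$ with the meet gives $\checkgeb{A}$.

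The routine parts are the small checks on the five elements of $M_{3}$, the bookkeeping that every operation inside the rotation stays within $[\,0,\,t_{pqr}\,]$ (it does, since $t_{pqr}$ bounds all operands so joins cannot escape), and verifying $\phi$ only on the two relevant inputs. The genuine obstacle is conceptual: every lattice formula is monotone, and the distance formulas are built purely from joins, so on their own they can only move values \emph{up} and can never certify the least element $0$ — reaching $0$ on rejection is impossible in any distributive quotient. This is precisely what forces the use of non-distributivity: it is the collapse $a_{pr} \wedge a_{qr} = 0$ together with $a_{pr} \vee a_{qr} = a_{pq} \vee a_{pr} = t_{pqr}$ inside the $M_{3}$-interval that lets the rotation drop the unwanted atoms to $0$. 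Thus the crux is to exhibit such an $M_{3}$ inside $\Pi_{i}$ and to check that $x \wedge t_{pqr}$ both projects into it and faithfully records the $p,q$ relation; once that is in place, the two propositions assemble both $\checkge{A}$ and $\checkgeb{A}$.
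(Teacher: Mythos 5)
Your construction of $\checkge{A}$ is correct (feeding $x \wedge A$ into the distance formulas is a sound variant of the paper's own definition, which feeds $x$ directly and uses monotonicity), and your per-pair $M_{3}$ detector is also correct: the rotation gadget does send $0$, $a_{pr}$, $a_{qr}$ to $0$ and $a_{pq}$, $t_{pqr}$ to $t_{pqr}$. The genuine gap is the step ``meeting these over the co-blocked pairs of $A$.'' The accept-value of the $(p,q)$ detector is $t_{pqr}$, the top of \emph{its own} interval, and the lattice meet of such tops is the intersection of the corresponding equivalence relations. Take $A = 12/34/56$ (legitimate, since the proposition is claimed for every $A$ in every $\Pi_{i}$ with $i \geq 2$, and the constant $i$ inherited from the Pudl{\'a}k--T{\r{u}}ma embedding may well be at least $6$). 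Its co-blocked pairs are the three pairwise disjoint pairs $(1,2)$, $(3,4)$, $(5,6)$, so on acceptance your formula outputs $t_{12r_{1}} \wedge t_{34r_{2}} \wedge t_{56r_{3}}$. For this meet to be nonzero, some edge $\{u,v\}$ would have to lie inside all three triples; but $\{u,v\} \subseteq \{1,2,r_{1}\}$ forces one of $u,v$ into $\{1,2\}$, and likewise for $\{3,4\}$ and $\{5,6\}$, which two elements cannot do for three disjoint sets. Hence $m_{A} = 0$ no matter how the auxiliary points $r_{j}$ are chosen: your formula is identically $0$ on acceptance and rejection alike, and no function whatsoever --- in particular no lift $\phi$ --- can then separate the two cases. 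The claim that $\phi$ ``raises $m_{A}$ to the global $1$ while keeping $0$ fixed'' silently assumes $m_{A} \neq 0$, and that assumption fails in general.

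The repair is to reverse the order of your last two steps: normalize \emph{each} detector to $\{0,1\}$ first, and only then take the meet. Your lift $\phi(y) = \bigvee_{j}\bigl((y \vee c_{j}) \wedge d_{j}\bigr)$ does exactly this job for a single detector, because its accept value $t_{pqr}$ is nonzero (take the terms $c = a_{pw}$, $d = a_{qw}$ over all $w \notin \{p,q\}$, plus $c = 0$, $d = a_{pq}$; these map $t_{pqr}$ to $1$ and fix $0$). Meeting the resulting Boolean values over all co-blocked pairs then yields $\checkgeb{A}$, and incidentally re-proves the first half without distance formulas at all. With this reordering your argument becomes a genuinely different, more local route than the paper's, which instead composes $\checkge{A}$ with the conjunction $\bigwedge_{B < 1} f_{B}$ of collapse maps $f_{B}(y) = \bigvee_{i}(y \wedge B_{i})$, each fixing $1$ and killing one possible non-$1$ output $B$ globally. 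But as written, the meet-then-lift order is an error that breaks the second half of the proposition.
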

\begin{proof}
  For the first part, simply define the formula $\checkge{A}(x) =
  \bigwedge_{B \not\geq A} \dist{A}{B}(x)$ when $A \neq 0$ where
  $\dist{A}{B}(x)$ is as defined in
  Proposition~\ref{prop:dist}. Define $\checkge{0}$ as identically 1.

  For the second part, consider the formula $f_{Z}$ that is defined if
  and only if $Z \neq 1$ and it maps 1 to 1 and $Z$ to 0 (the images
  of the rest of the elements in the lattice can be arbitrary). Let
  $Z$ have $k \geq 2$ partitions. Let $e_{1}, \ldots ,e_{m}$ be the
  edges of the complete $k$-partite graph on these $k$ partitions. Let
  $B_{1}, \ldots ,B_{m}$ be lattice elements such that $B_{i}$
  corresponds to the undirected graph that contains only the edge
  $e_{i}$.
 $ f_{Z}(x) = \bigvee_{i = 1}^{m} \left(x \wedge B_{i}\right)$.
 Now to complete the second part, define the formula $\checkgeb{A}(x) = \bigwedge_{B < 1}
  f_{B}(\checkge{A}(x))$ ($\checkgeb{0}$ is identically 1).
\end{proof}

\section{Generalization to Finite Bounded Posets and Universal Circuits}
\label{sec:general}

In this section, we consider comparator circuit models over arbitrary
fixed finite bounded posets instead of the Boolean lattice on two
elements. We then prove the existence of universal circuits for these
models. The existence of these generalized universal comparator
circuits imply that the classes characterized by comparator circuit
families over fixed finite bounded posets also have canonical complete
problems -- the comparator circuit evaluation problem over the same
fixed finite bounded poset.

\begin{definition}[Comparator Circuits over Fixed Finite Bounded Posets]
  A comparator circuit family over a finite bounded poset \poset{P}
  with an accepting element\footnote{In the definition of general
    Boolean circuits it is implicit that the element 1 is the
    accepting element. However, it does not make any difference even
    if we use 0 as the accepting element. This is because a Boolean
    circuit that accepts using 0 can be easily converted to one that
    accepts using 1 by complementing the output. This is not true for
    comparator circuits over bounded posets in general. Using
    different elements as accepting elements may change the power of
    the comparator circuit.} $a \in \poset{P}$ is a family of circuits
  $\family{C} = {\{ \circuit{C_n} \}}_{n \geq 0}$ where
  $\circuit{C_n}$ $=$ $(W, G, f)$ and $f : W \mapsto (\poset{P}
  \setunion \{ (i, g) : 1 \leq i \leq n \textrm{ and } g:\Sigma
  \mapsto \poset{P} \})$. Here $W = \{ w_1,
  \ldots , w_m \}$ is a set of lines and $G$ is an ordered list of
  gates $(w_i, w_j)$.

  On input $x \in \Sigma^n$, we define the output of the comparator
  circuit $\circuit{C_n}$ as follows.  Each line is initially assigned
  a value according to $f$ as follows. We denote the value of the line
  $w_i$ by $val(w_i)$.  If $f(w) \in \poset{P}$, then the value is the
  element $f(w)$.  Otherwise $f(w) = (i, g)$ and the initial value is
  given by $g(x_i)$.  A gate $(w_i, w_j)$ (non-deterministically)
  updates the value of the line $w_i$ into $val(w_i) \wedge val(w_j)$
  and the value of the line $w_j$ into $val(w_i) \vee val(w_j)$. The
  values of lines are updated by each gate in $G$ in order and the
  circuit accepts $x$ if and only if $val(w_1) = a$ at the end of the
  computation for some sequence of non-deterministic choices.

  Let $\Sigma$ be any finite alphabet. A comparator circuit family
  \family{C} over a bounded poset \poset{P} with an accepting element
  $a \in \poset{P}$ decides $\lang{L} \subseteq \Sigma^*$ if
  $\circuit{C_{|x|}}(x)=a$ if and only if $x \in \lang{L}$ $\forall x
  \in \Sigma^*$.

  All comparator circuit families in this paper are logspace-uniform
  unless mentioned otherwise.
\label{def:comp}
\end{definition}

\begin{remark}
  Note that when the underlying poset is a lattice, the output of all
  gates in the comparator circuit is deterministic. In other words,
  the non-determinism in the circuit comes from the fact that two
  elements in a poset need not have unique lubs and glbs.
\end{remark}

Note that we can generalize any circuit model that uses only AND and
OR gates to work over arbitrary bounded posets. However, as we will
see in this paper, the most interesting case is comparator circuits
over arbitrary bounded posets as they lead to new characterizations of
complexity classes other than \CC.

\begin{definition}
  We define the complexity class $\PCC{P}{a}$ as the set of all
  languages accepted by poly-size comparator circuit families over the
  finite bounded poset \poset{P} with accepting element $a \in
  \poset{P}$.
\end{definition}

If the complexity class does not change with the accepting element,
i.e., $\PCC{P}{a} = \PCC{P}{b}$ for any $a, b \in P$, we simply write
$\XCC{P}$ to refer to the complexity class $\PCC{P}{a}$.

We note that for any bounded poset \poset{P} with at least 2 elements,
we can simulate a Boolean lattice by using 0 (least element) and some
$a > 0$ in \poset{P}. Therefore, we have $\CC \subseteq \PCC{P}{a}$.

\begin{definition}
  For any finite bounded poset \poset{P} and any $a \in \poset{P}$,
  the comparator circuit evaluation problem \CCVP{P}{a} is defined as
  the set of all tuples $(C, x)$ such that $C$ on input $x$ has a
  sequence of non-deterministic choices where it outputs $a \in
  \poset{P}$ where \circuit{C} is a comparator circuit over \poset{P}.
\end{definition}

We now describe an encoding for the $\CCVP{P}{a}$ problem. The input
is encoded by a binary string of the form $1^{n}01^{m}0\{0,
1\}^{n(n-1)m + n}$. Here the last $n(n-1)m$ bits of the string can be
viewed as $m$ blocks of $n(n-1)$ bits where the $i^{\hbox{th}}$ block
has exactly one set bit, say $(k, j)$ where $k \neq j$, and it encodes
the fact that the $i^{\hbox{th}}$ gate is from line $k$ to line
$j$. The $n$ bits prior to these bits encode the initial values of $n$
lines. This encoding is logspace-equivalent to the ordered list
representation. We call strings of this form $(n, m)$-valid.  A given
$N$-bit string can be valid for at most one $(n, m)$ pair.  We first
prove that a universal comparator circuit exists even for comparator
circuit model working over arbitrary finite fixed posets.

\begin{proposition}
  \label{prop:univ-cir}
  For any bounded poset \poset{P}, there exists a universal comparator
  circuit $\circuit{U_{n,m}}$ over \poset{P} that when given
  $(\circuit{C}, x)$ as input, where $\circuit{C}$ is a comparator
  circuit over \poset{P} with $n$ lines and $m$ gates, simulates the
  computation of \circuit{C}. That is, \circuit{U_{n,m}} has a
  non-deterministic path that outputs $a \in \poset{P}$ if and only if \circuit{C}
  has such a path, for any $a \in \poset{P}$. Moreover, the size of
  \circuit{U_{n,m}} is $\poly(n,m)$.
\end{proposition}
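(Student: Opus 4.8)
The plan is to construct the universal circuit $\circuit{U_{n,m}}$ explicitly, following the strategy of Cook et al.~\cite{Cook12} but verifying that every step goes through when the Boolean lattice $\{0,1\}$ is replaced by an arbitrary finite bounded poset $\poset{P}$. The key observation is that a comparator circuit $\circuit{C}$ with $n$ lines and $m$ gates is completely specified by the initial values on the $n$ lines together with the ordered list of $m$ gates, each gate being a pair $(i,j)$ with $1 \le i,j \le n$. The encoding of $(\circuit{C},x)$ that $\circuit{U_{n,m}}$ reads consists of these initial line values (elements of $\poset{P}$, or values derived from $x$ via the labelling function $f$) and the gate list. Since $n$ and $m$ are fixed for a given $\circuit{U_{n,m}}$, the universal circuit need not parse arbitrary-length input --- it has a fixed "shape" and only the poset values flowing through it depend on the encoded instance.

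The heart of the construction is simulating a single gate of $\circuit{C}$ whose target lines $(i,j)$ are specified in the input rather than hardwired. For each of the $m$ gate-slots I would build a \emph{selector} gadget: given the current values $v_1,\dots,v_n$ on $n$ "work lines" and an encoded gate index $(i,j)$, the gadget must route $v_i$ and $v_j$ into a single comparator, apply the comparator (producing $v_i \wedge v_j$ and $v_i \vee v_j$), and route the results back to lines $i$ and $j$ while leaving all other lines unchanged. In the Boolean case this routing is done by comparator sub-circuits that conditionally swap values based on control bits; the main technical point I expect to confirm is that the same conditional-routing gadgets can be realized over $\poset{P}$ using only comparator (meet/join) operations, \emph{without fan-out}. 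Concretely, I would encode the gate index in unary as control values drawn from $\{0,1\} \subseteq \poset{P}$ (using the guaranteed least and greatest elements of the bounded poset), so that the multiplexing is governed by genuinely Boolean controls even though the data values range over all of $\poset{P}$. Chaining $m$ such selector gadgets in sequence simulates all $m$ gates in order, and the final value on the line corresponding to the designated output of $\circuit{C}$ is the output of the simulation.

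\textbf{The main obstacle} will be implementing the conditional routing purely with comparators and no fan-out while keeping the data values arbitrary elements of $\poset{P}$. A Boolean multiplexer that selects between two lines normally exploits identities like $x = (x \wedge c) \vee (x \wedge \bar c)$ that depend on distributivity and on $c \in \{0,1\}$; over a general poset these identities can fail, and moreover meet/join are set-valued and non-deterministic. I would resolve this by keeping all \emph{control} values Boolean (so the relevant meets and joins involve $0$ and $1$, which behave as absorbing/identity elements in any bounded poset: $v \wedge 0 = \{0\}$, $v \vee 0 = \{v\}$, $v \wedge 1 = \{v\}$, $v \vee 1 = \{1\}$, where I use that $0$ and $1$ are the least and greatest elements), and carrying each data value alongside its "complement slot" so that exactly one copy survives the gating --- thereby simulating fan-out of a data value through a sequence of no-fan-out comparators against Boolean controls. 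Because the controls are Boolean, each non-deterministic choice on a control gate has a unique outcome, so the non-determinism of the universal circuit over $\poset{P}$ exactly mirrors the non-determinism of $\circuit{C}$: the simulation preserves the set of reachable output values, giving the required equivalence "$\circuit{U_{n,m}}$ has a non-deterministic path outputting $a$ iff $\circuit{C}$ does, for every $a \in \poset{P}$."

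Finally I would account for the size. Each selector gadget uses $\poly(n)$ comparators (a constant number of gating operations per work line, times $O(n)$ lines, possibly with an extra $O(\log n)$ or $O(n)$ factor for decoding the index), and there are $m$ gadgets, so the total size is $\poly(n,m)$ as claimed. The initial labelling is handled by $f$ placing the encoded values of $(\circuit{C},x)$ on the appropriate input lines, which costs no gates. I would close by noting that the construction is uniform --- the gadget layout depends only on $n$ and $m$, not on the particular $\circuit{C}$ encoded --- which is exactly the universality property asserted in the statement.
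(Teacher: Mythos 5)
Your proposal is correct in substance and rests on the same key fact as the paper's proof: in any bounded poset, gates whose operands include the extremal elements are deterministic and satisfy $v \wedge 0 = 0$, $v \vee 0 = v$, $v \wedge 1 = v$, $v \vee 1 = 1$, so Boolean-controlled gadgets (with controls and their complements supplied by the input encoding) need neither distributivity nor fan-out, and the only genuinely non-deterministic gates are the ones simulating \circuit{C}'s own comparators --- a point the paper leaves implicit but you state correctly. The difference is in the assembly. The paper never routes values: for each of the $m$ gate slots it instantiates all $n(n-1)$ possible comparators, each wrapped in the four-gate conditional gadget of Figure~\ref{fig:cond} whose enable bit comes from a one-hot encoding of the gate list; disabled gadgets are exact pass-throughs, so each slot's net effect is the single enabled comparator, for $O(n^{2}m)$ gadgets in total. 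Your multiplexer-style variant (conditionally move $v_{i}$ and $v_{j}$ to a central comparator, then move the results back) is also realizable --- a conditional move is just the same conditional gadget applied against a line holding $0$ --- and would use only $O(nm)$ gadgets, but it carries an obligation that the paper's design sidesteps entirely: gating line $k$ against its control displaces the values of \emph{non-selected} lines (line $k$ becomes $v_{k} \wedge e_{k}$ while the companion control line becomes $v_{k} \vee e_{k}$), so ``leaving all other lines unchanged'' is not automatic and requires an explicit restoration step before the next slot; this restoration is the crux of your gadget, and your write-up leaves it at the level of a strategy rather than a construction. Finally, describing the mechanism as ``simulating fan-out'' is misleading: no data value is ever duplicated (duplication of arbitrary poset values is exactly what comparator circuits cannot do, and providing it is what the specially designed lattice of Lemma~\ref{lem:P} is for); what your ``exactly one copy survives'' trick gives is conditional routing, which is indeed all that is needed here.
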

\begin{proof}
  We simply observe that the construction for a universal circuit for
  the class \CC\ in \cite{Cook12} generalizes to arbitrary bounded
  posets. The gadget shown in Figure~\ref{fig:cond} simulates the
  comparator gate $g = (y, x)$ depending on the ``enable'' input
  $e$. Here the inputs $e$ and $\overline{e}$ satisfy the following
  property. If $e = 0$ ($e = 1$), then $\overline{e} = 1$
  ($\overline{e} = 0$ resp) where $0$ and $1$ are the least and
  greatest elements of the bounded poset \poset{P} respectively.  If
  the enable input is 1, then gate $g$ is active. If the enable input
  is 0, then the gate $g$ acts as a pass-through gate, i.e., the lines
  labelled $x$ and $y$ retain their original values.
  
  Now to simulate a single gate in the circuit \circuit{C}, the
  universal circuit uses $n(n-1)$ such gadgets where $n$ is the number
  of lines in $\circuit{C}$. The inputs $e$ and $\overline{e}$ for
  each gadget is set according to \circuit{C}. The circuit \circuit{C}
  can be simulated using $n(n-1)m$ gates where $m$ is the number of
  gates in \circuit{C}.
\end{proof}
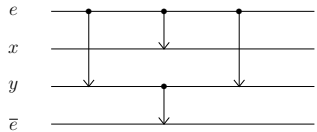
\begin{wrapfigure}{r}{0.5\textwidth}
\begin{center}
\scalebox{0.5}{%
\begin{tikzpicture}
\node (c1) at (1, 3) {\textbullet};
\node (c2) at (1, 1) {};
\draw [-biggertip] (c1.center) -- (c2.center);
\node (c3) at (3, 3) {\textbullet};
\node (c4) at (3, 2) {};
\draw [-biggertip] (c3.center) -- (c4.center);
\node (c5) at (3, 0) {};
\node (c6) at (3, 1) {\textbullet};
\draw [-biggertip] (c6.center) -- (c5.center);
\node (c7) at (5, 3) {\textbullet};
\node (c8) at (5, 1) {};
\draw [-biggertip] (c7.center) -- (c8.center);

\node (a) at (-1, 3) {\Large$e$};
\draw (0,3) -- ++(7, 0);
\node (b) at (-1, 2) {\Large$x$};
\draw (0,2) -- ++(7, 0);
\node (x) at (-1, 1) {\Large$y$};
\draw (0,1) -- ++(7, 0);
\node (y) at (-1, 0) {\Large$\overline{e}$};
\draw (0,0) -- ++(7, 0);

\end{tikzpicture}
}
\end{center}
\caption{Conditional Comparator Gadget}
\label{fig:cond}
\end{wrapfigure}

The following proposition is a generalization of the corresponding
theorem for Boolean comparator circuits in \cite{Cook12}.

\begin{proposition}
  The language \CCVP{P}{a} is complete under logspace reductions for
  the class \PCC{P}{a} for all finite bounded posets \poset{P} and any
  $a \in \poset{P}$.
\end{proposition}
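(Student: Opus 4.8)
The plan is to prove completeness in two parts: membership of \CCVP{P}{a} in \PCC{P}{a}, and hardness of \CCVP{P}{a} for \PCC{P}{a} under logspace many-one reductions. Both directions should follow by generalizing the Boolean argument of Cook et al.~\cite{Cook12}, with the universal circuit of Proposition~\ref{prop:univ-cir} doing the heavy lifting.

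For membership, I would exhibit a poly-size \L-uniform comparator circuit family over \poset{P} that decides \CCVP{P}{a}. The key observation is that an instance $(C, x)$ of \CCVP{P}{a} has a natural size parameter: if $C$ has $n$ lines and $m$ gates, then by Proposition~\ref{prop:univ-cir} the universal circuit $\circuit{U_{n,m}}$ over \poset{P} has a non-deterministic path outputting $a$ exactly when $C$ does. So for each input length $N$, the $N$-th circuit in the family is $\circuit{U_{n,m}}$ for the appropriate $n, m$ bounded by $N$, with the encoding of $(C,x)$ read off the input to set the enable bits $e, \overline{e}$ and the initial line values (which, per Definition~\ref{def:comp}, are supplied through the input-reading functions $g$). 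Since $\circuit{U_{n,m}}$ has size $\poly(n,m) = \poly(N)$ and its description is logspace-computable, this places \CCVP{P}{a} in \PCC{P}{a}.

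For hardness, I would take an arbitrary language $\lang{L} \in \PCC{P}{a}$, decided by some poly-size \L-uniform family $\family{C} = {\{ \circuit{C_n} \}}$, and describe a logspace many-one reduction mapping $x$ to an instance of \CCVP{P}{a}. On input $x$ of length $n$, the reduction uses the uniformity machine to print the circuit $\circuit{C_n}$, substitutes the input bits of $x$ into the input-reading line-initialization functions so that every line's initial value is a fixed element of \poset{P}, and outputs the pair $(\circuit{C_n}, x)$ (equivalently, the fully-instantiated circuit paired with a trivial input). By definition of acceptance in Definition~\ref{def:comp}, $x \in \lang{L}$ iff $\circuit{C_n}$ has a non-deterministic path outputting $a$, which is precisely the condition $(\circuit{C_n}, x) \in \CCVP{P}{a}$. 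The logspace bound follows because the family is \L-uniform and instantiating the input-reading functions is a purely local, space-efficient substitution.

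The main obstacle, and the only place the generalization requires care beyond the Boolean case, is the bookkeeping around the non-deterministic semantics and the accepting element $a$. In the Boolean setting of \cite{Cook12} the meet and join are single-valued, so there is no non-determinism and acceptance is simply the output line carrying $1$; here the gates compute \emph{sets} of maximal lower bounds and minimal upper bounds, and acceptance asks for the \emph{existence} of a non-deterministic path outputting $a$. I would verify that the universal circuit of Proposition~\ref{prop:univ-cir} faithfully preserves this existential semantics gate-by-gate — which it does, since that proposition is already stated for arbitrary $a \in \poset{P}$ and explicitly tracks non-deterministic paths — so that the set of reachable outputs of $\circuit{U_{n,m}}$ coincides with that of $C$, and in particular $a$ is reachable in one iff it is reachable in the other. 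Once this correspondence is in hand, both directions reduce to the routine uniformity and substitution arguments sketched above.
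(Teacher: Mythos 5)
Your overall plan coincides with the paper's proof: hardness is the trivial application of the uniformity machine (print $\circuit{C_n}$ and output the tuple $(\circuit{C_n},x)$), and membership rests on the universal circuit of Proposition~\ref{prop:univ-cir}. However, your membership direction has a genuine gap in how the circuit family is assembled from the universal circuits. You write that the $N$-th circuit of the family ``is $\circuit{U_{n,m}}$ for the appropriate $n,m$ bounded by $N$.'' This is not well-defined, and would fail even where it is: first, a single input length $N$ can be compatible with more than one pair $(n,m)$, so there need not be a unique ``appropriate'' universal circuit for length $N$; second, and more importantly, $\CCVP{P}{a}$ is a language over binary strings, so the circuit for length $N$ must \emph{reject} every length-$N$ string that is not a well-formed encoding of some pair $(C,x)$. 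A comparator circuit cannot branch on well-formedness, and a universal circuit fed a malformed string may perfectly well output $a$; the family as you describe it would therefore accept strings outside the language. Note also that the obstacle you flag as the main one --- preservation of the existential non-deterministic semantics --- is already guaranteed by the statement of Proposition~\ref{prop:univ-cir}, so it is not where the real work lies.

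The paper closes exactly the gap above. It fixes a concrete encoding ($1^{n}01^{m}0\{0,1\}^{n(n-1)m+n}$, so $N = 2n+m+2+n(n-1)m$), observes that a given $N$-bit string can be $(n,m)$-valid for at most one pair, and constructs for each candidate pair a logspace-uniform comparator circuit $V_{n,m}$ over the $0$--$1$ sublattice of $\poset{P}$ that outputs $1$ iff the input is $(n,m)$-valid. The circuit emitted for input length $N$ is then $\bigvee_{(n,m)} \left( U_{n,m} \wedge V_{n,m} \right)$, the join and meet taken over all pairs consistent with $N$: the meet with $V_{n,m}$ forces the contribution of every pair for which the input is invalid down to $0$, and the outer join then passes through the output of the unique valid simulation (or $0$ if there is none). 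Your proof needs this validation-and-combination step, or an equivalent one, to be correct; the remainder of your argument matches the paper's.
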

\begin{proof}
  The problem \CCVP{P}{a} is trivially hard for the class
  \PCC{P}{a}. Let $\lang{L} \in \PCC{P}{a}$ via a logspace-uniform
  circuit family $\{ \circuit{C_n} \}$. Now given $x$ as input, we
  output the tuple $(\circuit{C_n}, x)$ by running the uniformity
  algorithm.

  The fact that $\CCVP{P}{a} \in \PCC{P}{a}$ follows from
  Proposition~\ref{prop:univ-cir}. Given a string, it can be checked
  in logspace whether it is $(n, m)$-valid once $n$ and $m$ are
  fixed. Let $V_{n, m}$ be a logspace uniform comparator circuit over
  the 0--1 lattice that takes an $N$ bit string as input and outputs 1
  iff the input is an $(n, m)$-valid string. Let $N = 2n + m + 2 +
  n(n-1)m$ be the total length of the input. The uniformity machine on
  input $N$ writes out the description of $\bigvee_{{(n, m)}} U_{{n,
      m}} \wedge V_{{n, m}}$ over all $(n, m)$ pairs satisfying $N =
  2n + m + 2 + n(n-1)m$.
\end{proof}

\section{Comparator Circuits over Lattices}
\label{sec:P}

First, we show that comparator circuits over distributive lattices characterize the class $\CC$.

\begin{theorem}
  Let $L$ be any non-trivial finite distributive lattice and $a \in L$
  be an arbitrary element. Then $\CC = \PCC{L}{a}$.
\label{thm:CC}
\end{theorem}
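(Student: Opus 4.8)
The plan is to prove the two inclusions separately. The inclusion $\CC \subseteq \PCC{L}{a}$ is essentially the remark preceding the theorem: for any $a > 0$ the two-element set $\{0, a\}$ is a sublattice of $L$ isomorphic to the Boolean lattice (since $0 \wedge a = 0$ and $0 \vee a = a$), so any Boolean comparator circuit family deciding a language in $\CC$ can be run with all line values restricted to $\{0, a\}$ and accepting element $a$. For the degenerate accepting element $a = 0$ (assuming $|L| \ge 2$) I would instead use the sublattice $\{0, 1\}$ and feed it a Boolean comparator circuit family for the complement $\overline{L}$ whose output wire carries $1$ exactly when $x \notin L$; reading $0$ on that wire then accepts $L$. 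The closure of $\CC$ under complement needed here follows from Cook et al.\ (and also from the De Morgan duality used below).

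For the harder inclusion $\PCC{L}{a} \subseteq \CC$, the key tool is Birkhoff's representation theorem (Theorem~\ref{thm:birkhoff}). Fix the finite ground set $J$ of join-irreducibles of $L$, with $k = |J|$ a constant, and let $\phi : L \to 2^{J}$ be the embedding satisfying $\phi(x \vee y) = \phi(x) \cup \phi(y)$ and $\phi(x \wedge y) = \phi(x) \cap \phi(y)$; I identify each $x$ with its characteristic vector $(\phi(x)_1, \ldots, \phi(x)_k) \in \{0,1\}^k$. The crucial observation is that an $L$-comparator gate decomposes coordinatewise: producing $x \wedge y$ and $x \vee y$ amounts, in each coordinate $t$, to producing $\phi(x)_t \wedge \phi(y)_t$ and $\phi(x)_t \vee \phi(y)_t$, which is exactly a Boolean comparator gate acting on the $t$-th bits, and the $k$ coordinates never interact. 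Hence I would simulate a comparator circuit $C$ over $L$ by $k$ independent Boolean copies: replace each line $w_i$ by $k$ Boolean lines $w_i^{(1)}, \ldots, w_i^{(k)}$, and replace each gate $(w_i, w_j)$ by the $k$ Boolean comparator gates $(w_i^{(t)}, w_j^{(t)})$. Initial values are set to the corresponding bits of $\phi$ of the line value; input-dependent values $g(x_i)$ contribute bits that are functions of $O(1)$ input symbols and are supplied as literals and constants of an annotated Boolean comparator circuit. This yields a polynomial-size, logspace-uniform Boolean comparator circuit.

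The remaining and main difficulty is the acceptance test: $C$ accepts iff its output line satisfies $w_1 = a$, i.e.\ iff $w_1^{(t)} = a_t$ for every $t$. The conditions $w_1^{(t)} = 1$ (for $t$ with $a_t = 1$) are monotone and can be combined by folding the relevant output wires together with comparator gates. The conditions $w_1^{(t)} = 0$ (for $t$ with $a_t = 0$) are the obstacle, since comparator circuits are monotone and forbid negating internal wires. I would resolve this using the De Morgan duality of comparator gates: running a second copy $C^{d}$ of the Boolean circuit on the complemented initial values makes every wire of $C^{d}$ carry the complement of the corresponding wire of $C$ (with the AND/OR output roles of each gate swapped), because a comparator gate on $\overline{p}, \overline{q}$ outputs $\overline{p \vee q}$ and $\overline{p \wedge q}$. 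Thus $\overline{w_1^{(t)}}$ becomes available as an ordinary wire of $C^{d}$, and the condition $w_1^{(t)} = 0$ turns into the monotone condition that this dual wire is $1$. Finally I would \textsc{and} together all selected primal wires $\{ w_1^{(t)} : a_t = 1 \}$ and dual wires $\{ \overline{w_1^{(t)}} : a_t = 0 \}$ by a monotone fold and designate the result as the output line. The overall circuit has $O(k)$ times the lines and gates of $C$, is logspace-uniform, and accepts exactly the original language, placing it in $\CC$ via the annotated comparator circuit characterization of Cook et al. The step I expect to be the real crux is precisely this acceptance test, where monotonicity of comparator circuits must be circumvented by the dual construction rather than by a direct negation.
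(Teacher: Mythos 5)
Your proposal is correct and follows essentially the same route as the paper's proof: Birkhoff's representation theorem together with a coordinatewise simulation that replaces each $L$-line by $k$ Boolean lines carrying the characteristic vector and each $L$-gate by $k$ Boolean comparator gates. You are in fact more careful than the paper's own (rather terse) argument, which leaves implicit both the acceptance test for the zero coordinates of $\phi(a)$ --- which you handle correctly via the dual-circuit (De Morgan) trick underlying the closure of $\CC$ under complement --- and the degenerate case $a = 0$ of the inclusion $\CC \subseteq \PCC{L}{a}$.
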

\begin{proof}
  By Birkhoff's representation theorem, every finite distributive
  lattice of $k$ elements is isomorphic to a lattice where each
  element is some subset of $[k]$ (ordered by inclusion) and the join
  and meet operations in the original finite distributive lattice
  correspond to set union and set intersection operations in the new
  lattice. We will use this to simulate a circuit over an arbitrary
  finite distributive lattice $\lattice{L}$ of size $k$ using a
  circuit over the 0--1 lattice. Each line $w$ in the original circuit
  is replaced by $k$ lines $w_1, \ldots, w_k$.  The invariant
  maintained is that whenever a line in the original circuit carries
  $a \in \lattice{L}$, these $k$ lines carry the characteristic vector
  of the set corresponding to the element $a$. Now a gate $(w, x)$ in
  the original circuit is replaced by $k$ gates $(w_1, x_1), \ldots
  ,(w_k, x_k)$ in the new circuit. The correctness follows from the
  fact that meet and join operations in the original circuit
  correspond to set union and set intersection which in turn
  correspond to AND and OR operations of the characteristic vectors.

  We now prove that $\CC \subseteq \PCC{L}{a}$. First, we consider the
  case where $a\neq 0$. We replace the Boolean value 0 by the minimum
  element in $L$ and the Boolean value 1 by the maximum element in
  $L$. The output wire of the new circuit is $o\wedge a$ where $o$ is
  the original output wire. It is easy to see that this circuit output
  $a$ if and only if the original circuit outputs $1$. If $a\neq 0$,
  we can use the fact that the class $\CC$ is closed under
  complementation to construct a Boolean comparator circuit that
  accepts using 0. This can be easily translated to an $\PCC{L}{a}$
  circuit as above.
\end{proof}

Now we consider comparator circuits over fixed finite lattices.  Note
that when characterizing the class \P\ in terms of Boolean circuits,
the fan-out of gates is required to be at least 2. In fact, Mayr and
Subramanian's~\cite{MayrS92} primary motivation while introducing the
class \CC\ was to study fan-out restricted circuits. We show that if
comparator circuits are given the freedom to compute over any
lattice (as opposed to the Boolean lattice on 2 elements), then the
fan-out restriction is irrelevant.

\begin{wrapfigure}{r}{0.4\textwidth}
\centering
\begin{tikzpicture}
 \node (A)  at (1.2, 4) {\smalllabel{$p$}};
 \node (L) at (3.8, 4) {\smalllabel{$0''$}};
 \node (I) at (4.6, 4) {\smalllabel{${0'}^{\circ}$}};
 \node (C) at (5.5, 4) {\smalllabel{$z$}};

 \node (Y) at (2.85, 4.3) {\smalllabel{0}};

 \node (Z) at (3.1, 5.2) {\smalllabel{$1''$}};

 \node (S) at (3.3, 5.8) {\smalllabel{$x$}};

 \node (X)  at (2.85, 6.2) {\smalllabel{1}};
 \node (T)at (4.1, 6.2) {\smalllabel{$0'$}};
 \node (G) at (4.75, 6.2) {\smalllabel{$1'^{\circ}$}};

 \node (B) at (1.2, 6.8) {\smalllabel{$0^{\circ}$}};
 \node (D) at (5.5, 6.8) {\smalllabel{$0'^{\circ\circ}$}};

 \node (P) at (1.25, 7.5) {\smalllabel{$1^\circ$}};
 \node (M) at (2.85, 7.5) {\smalllabel{$w$}};
 \node (U) at (3.55555, 7.5) {\smalllabel{$1'$}};
 \node (E) at (4.5, 7.5) {\smalllabel{$y$}};
 \node (R) at (5.5, 7.5) {\smalllabel{$1'^{\circ\circ}$}};

 \draw (Y) -- (T);
 \draw (X) -- (U);

 \draw (L) -- (Y);
 \draw (Z) -- (X);

 \draw (Y) -- (B);
 \draw (X) -- (P);

 \draw (I) -- (T);
 \draw (G) -- (U);

 \draw (G) -- (R);
 \draw (I) -- (D);

 \draw (X) -- (R);
 \draw (Y) -- (D);

 \draw (Y) -- (X);
 \draw (A) -- (B);
 
 \draw (Y) -- (L);
 
 \draw (L) -- (Z);
 
 \draw (I) -- (G);
 
 \draw (C) -- (D);
 
 \draw (Z) -- (S);
 
 \draw (S) -- (T);
 
 
 \draw (X) -- (M); 
 
 \draw (T) -- (U);

 \draw (G) -- (E);
 
 \draw (B) -- (P); 
 
 \draw (D) -- (R);
 
\end{tikzpicture}
\caption{The poset for simulating \P}
\label{fig:P-poset}
\end{wrapfigure}
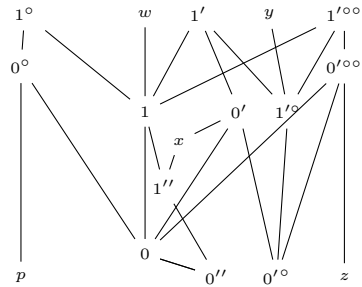

The following lemma describes a fixed finite lattice over which
comparator circuits capture \P. However, it is not clear whether the
class captured by comparator circuits over this lattice is independent
of the accepting element. In Theorem~\ref{thm:P}, we show that there
exists a lattice that captures $\P$ irrespective of the accepting
element. The language $\MCVP$ consists of all tuples $(C, x)$ where
$C$ is a Boolean circuit with only AND, OR and input gates. Here $x
\in {\{0, 1\}}^{n}$ where $n$ is the number of input gates to $C$ and
$x$ specifies the value of each of these input gates. In the proof, we
will reduce in logspace the language $\MCVP$ which is complete for the
class \P\ under logpace reductions to the comparator circuit value
problem over the finite lattice given in Figure~\ref{fig:P-lattice}.

\begin{lemma}
  Let \lattice{L} be the lattice in Figure~\ref{fig:P-lattice}. Then
  $\P = \PCC{L}{1}$ (Note that $1$ is not the maximum element in the
  lattice).
\label{lem:P}
\end{lemma}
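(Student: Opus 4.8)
The plan is to prove the two inclusions separately. For $\PCC{L}{1} \subseteq \P$ I would first observe that, since $L$ is a lattice, every pair of elements has a \emph{unique} meet and join, so the a~priori non-deterministic evaluation of a comparator circuit over $L$ is in fact deterministic and there is a single computation path. Hence, on input $x$, I can generate the circuit $C_{|x|}$ using the logspace uniformity machine (which runs in polynomial time), initialise the line values according to $f$, and then process the gates in order, each gate $(w_i,w_j)$ replacing $(val(w_i),val(w_j))$ by $(val(w_i)\wedge val(w_j),\, val(w_i)\vee val(w_j))$; since $L$ is a fixed finite lattice, each such update is a constant-time table lookup. After all gates are processed I accept iff $val(w_1)=1$. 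This runs in polynomial time, giving $\PCC{L}{1}\subseteq\P$.

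The interesting direction is $\P \subseteq \PCC{L}{1}$, and I would prove it by reducing from the monotone Circuit Value Problem, which is $\P$-complete. Given a Boolean circuit $B$, the goal is to build a polynomial-size comparator circuit over $L$ that simulates $B$ gate by gate. A comparator gate $(w_i,w_j)$ already produces the meet on $w_i$ and the join on $w_j$, so the AND and OR gates of $B$ can be simulated directly once Boolean $0$ and $1$ are represented by suitable elements of $L$. The crux---and the whole reason the two-element Boolean lattice yields only $\CC$ while a general lattice yields $\P$ (cf. Theorem~\ref{thm:CC})---is that comparator circuits forbid fan-out: to simulate a gate of $B$ whose output feeds several later gates, I must \emph{copy} a bit without using any fan-out gate. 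This is exactly the step that is (conjecturally) impossible over the distributive Boolean lattice but becomes possible over the non-distributive $L$.

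The key gadget is therefore a copy gadget exploiting non-distributivity. The idea is to choose a constant element $\gamma\in L$ together with encodings $\alpha,\beta$ of Boolean $0,1$ such that the two maps $v\mapsto v\wedge\gamma$ and $v\mapsto v\vee\gamma$ are \emph{both} injective on $\{\alpha,\beta\}$. A single comparator gate applied to a line carrying $v\in\{\alpha,\beta\}$ and a constant line carrying $\gamma$ then leaves $v\wedge\gamma$ on one line and $v\vee\gamma$ on the other, i.e. it produces two independent copies of the bit, each in a new encoding. Auxiliary ``translator'' gadgets---again just comparator gates against appropriate constants---convert these derived encodings back to the canonical encoding $\{\alpha,\beta\}$, so that a value can be duplicated arbitrarily. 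The many labelled elements of $L$ (the primed, double-primed and circled copies of $0$ and $1$, together with $x,y,z,w,p$) are exactly the constants and intermediate encodings needed to make this family of gadgets close up consistently, and the designated accepting element $1$ is the canonical encoding of Boolean true on the output line.

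The main obstacle is the verification step: I must confirm that $L$ genuinely contains elements $\gamma,\alpha,\beta$ and all the intermediate constants realising the required meet/join identities, and that each translator gadget maps encodings exactly as claimed for \emph{both} input values, with no collisions. Concretely this is a finite but delicate case analysis over the meet and join tables of $L$ read off from its Hasse diagram, and getting all these relations to hold simultaneously is precisely what forces the particular non-distributive structure of $L$. Once the gadgets are verified, assembling them in the topological order of $B$---using one copy gadget per unit of fan-out and routing the encoded values accordingly---yields a comparator circuit of size polynomial in $|B|$ whose output line carries $1$ iff $B$ outputs true, completing $\P\subseteq\PCC{L}{1}$.
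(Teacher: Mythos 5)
Your proposal is correct and follows essentially the same route as the paper: reduce from the $\P$-complete monotone circuit value problem, simulate AND/OR directly by meet/join on encodings of the Boolean values, and handle the fan-out restriction with a copy gadget that comparator-gates the value against a fixed constant (the paper's element $x$, your $\gamma$) whose meet and join maps are both injective on the encoding, followed by constant-gate translator gadgets (the paper's identities involving $y$, $z$, $w$, $p$) returning each copy to the canonical encoding. The paper's only presentational difference is that it verifies these identities in the poset of Figure~\ref{fig:P-poset} and then invokes Dedekind-MacNeille completion to transfer them to the lattice $L$, whereas you propose checking them directly in $L$'s meet/join tables.
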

\begin{proof}
Let $(\circuit{C}, x)$ be the input to $\MCVP$. For each wire in
\circuit{C}, we add a line to our comparator circuit. The initial
value of the lines that correspond to the input wires of \circuit{C}
are set to 0 or 1 of the poset \poset{P} shown in Figure~\ref{fig:P-poset}
according to whether they are 0 or 1 in $x$. The comparator circuit
simulates \circuit{C} in a level by level fashion maintaining the
invariant that the lines carry 0 or 1 depending on whether they carry
0 or 1 in \circuit{C}. We will show how our comparator circuit
simulates a level 1 OR gate of fan-out 2.  The proof then follows by
an easy induction.

Since $0 \pleq{P} 1$ an AND (OR) gate in \circuit{C} can be simulated
by a meet (join) operation in \poset{P}. The gadget shown in
Figure~\ref{fig:copy} is used to implement the fan-out
operation. The idea is that the first gate in the gadget implements
the AND/OR operation and the rest of the gates in this gadget ``copy''
the result of this operation into the lines $o_1$ and $o_2$ that
correspond to the two output wires of the gate. The reader can
verify that the elements of \poset{P} satisfy the following meet and
join identities. Figure~\ref{fig:copy} shows how one could use the
following identities to copy the output of $a \vee b$ into two lines
(labelled $o_1$ and $o_2$).

The identity $0 \vee 1 = 1$ is used to implement the Boolean AND/OR
operation. This is used by the first gate in Figure~\ref{fig:copy}.
Once the required value is computed. We add a gate between the line
carrying the result of the AND/OR operation and a line with 
value $x$. As the following identities show, this makes two ``copies''
of the result of the Boolean operation.
  $0 \vee x = 0'$, $1 \vee x = 1'$,
  $0 \wedge x = 0''$, $1 \wedge x = 1''$

Now, the following identities can be used to convert the first copy
($0'$ or $1'$) into the original value ($0$ or $1$).
  $0' \wedge y = 0'^{\circ}$, $1' \wedge y = 1'^{\circ}$,
  $0'^{\circ} \vee z = 0'^{\circ\circ}$, $1'^{\circ} \vee z = 1'^{\circ\circ}$,
  $0'^{\circ\circ} \wedge w = 0$, $1'^{\circ\circ} \wedge w = 1$

Similarly, the following identities can be used to convert the second copy
($0''$ or $1''$) into the original value ($0$ or $1$).
  $0'' \vee p = 0^{\circ}$, $1'' \vee p = 1^{\circ}$, 
  $0^{\circ} \wedge w = 0$, $1^{\circ} \wedge w = 1$

The lattice in Figure~\ref{fig:P-lattice}~ is simply the
Dedekind-MacNeille completion of \poset{P}. Since the
Dedekind-MacNeille completion preserves all existing meets and joins,
the same computation can also be performed by this lattice.

To see that for any lattice \lattice{L} and any $a \in \lattice{L}$,
$\PCC{L}{a}$ is in $\P$, observe that in poly-time we can evaluate the
$n^{th}$ comparator circuit from the comparator circuit family for the
language in \PCC{L}{a}.
\end{proof}

Lemma~\ref{lem:P} shows that the complexity class captured by the
comparator circuit could change (Assuming $\CC \neq \P$) depending on
the underlying lattice and the accepting element. In the following
theorem, we show that if we consider any partition lattice, say
$\Pi_i$, that embeds $L$ (in Lemma~\ref{lem:P}), then the complexity
class is $\P$ irrespective of the accepting element. We crucially use
the fact that the circuit in the proof of Lemma~\ref{lem:P} outputs
only the elements 0 and 1 in $L$.

\begin{theorem}
  There exists a constant $i$ such that $\XCC{\Pi_{i}} = \P$.
  \label{thm:P}
\end{theorem}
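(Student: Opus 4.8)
The plan is to combine the fixed lattice of Lemma~\ref{lem:P} with the embedding theorem of Pudl\'ak and T\r{u}ma (Theorem~\ref{thm:pudlak}), and then to remove the dependence on the accepting element using the formulas $\checkge{\cdot}$ and $\checkgeb{\cdot}$ constructed earlier. By Lemma~\ref{lem:P} there is a fixed finite lattice \lattice{L} with $\P = \PCC{L}{1}$, and by Theorem~\ref{thm:pudlak} there is a constant $i$ and an order embedding $f$ of \lattice{L} as a sublattice of $\Pi_{i}$ that preserves meets and joins. First I would replace every constant $c \in \lattice{L}$ appearing in the comparator circuit family of Lemma~\ref{lem:P} by its image $f(c) \in \Pi_{i}$. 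Since $f$ preserves meets and joins, the whole computation stays inside $f(\lattice{L})$ and mirrors the original gate for gate; hence the new family outputs $f(1)$ exactly when the original outputs $1$ and $f(0)$ otherwise, giving $\P = \PCC{\Pi_{i}}{f(1)}$.

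It remains to show that $\PCC{\Pi_{i}}{a} = \P$ for \emph{every} accepting element $a \in \Pi_{i}$, which is the content of $\XCC{\Pi_{i}} = \P$. The containment $\PCC{\Pi_{i}}{a} \subseteq \P$ is immediate: over a lattice the meet and join of any pair are unique, so a comparator circuit over $\Pi_{i}$ is in fact deterministic and its output line can be evaluated in polynomial time. For the reverse containment, take $\lang{L} \in \P$ and start from the circuit above, whose output line carries $f(1)$ on acceptance and $f(0)$ on rejection, with $f(0) < f(1)$. I would post-process this line: applying $\checkgeb{f(1)}$ yields the top element $1$ of $\Pi_{i}$ when the line carries $f(1)$ and the bottom element $0$ when it carries $f(0)$ (using $f(0) \not\geq f(1)$, which holds since $0 < 1$ in \lattice{L} and $f$ is an order embedding). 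A final comparator gate with the constant $a$ then turns this into $a$ on acceptance and $0 \wedge a = 0$ on rejection, so the output equals $a$ precisely on acceptance whenever $a \neq 0$.

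The remaining case $a = 0$ is handled by closure of $\P$ under complementation: running the same construction on a circuit for $\Sigma^{*} \setminus \lang{L}$ and applying $\checkgeb{f(1)}$ produces the bottom element $0$ exactly when $x \in \lang{L}$, so accepting element $0$ also captures $\lang{L}$. To make this a genuine comparator circuit I must realize the formula $\checkgeb{f(1)}$, which references its argument at several leaves and so has fan-out greater than one. I would overcome this by first duplicating the output bit: since the line carries only $f(0)$ or $f(1)$, and these are the images of $0,1 \in \lattice{L}$, the copy gadget of Figure~\ref{fig:copy} (whose identities are preserved by $f$) can be used to make polynomially many copies of it, one for each leaf of the formula. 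The formula is then laid out as a tree of comparator gates, each realizing one meet or join and discarding its unused output. I expect the main obstacle to be exactly this step: arguing that the monotone formulas $\checkge{\cdot}$ and $\checkgeb{\cdot}$ can be implemented inside the fan-out-free comparator model, and organizing the accepting-element conversion so that it is correct uniformly for every $a$, in particular across the boundary case $a = 0$ via complementation.
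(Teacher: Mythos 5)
Correct, and essentially the paper's own approach: Lemma~\ref{lem:P}, the Pudl\'ak--T\r{u}ma embedding (Theorem~\ref{thm:pudlak}), post-processing the output with $\checkgeb{\cdot}$, and closure of $\P$ under complement to cover the remaining accepting elements. The differences are only organizational: the paper combines the two complementary circuits as $(X \wedge C') \vee (Y \wedge C'')$, which prescribes the accepting element $X$ and the rejecting element $Y$ simultaneously and so avoids your case split on $a = 0$, while you make explicit the fan-out issue in realizing $\checkgeb{\cdot}$ inside the fan-out-free model (resolved via the copy gadget, whose identities survive the embedding), a point the paper leaves implicit.
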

\begin{proof}
  We know that there exists a finite lattice $L$ and an $a, b \in L$
  such that for any language $\lang{M} \in \P$ there exists a
  comparator circuit family over $L$ that decides $\lang{M}$ by using
  $a$ to accept and $b$ to reject. Also $b < a$.  By
  Pudl\'ak-T{\r{u}}ma theorem \cite{pudlak80}, we know that there
  exists a constant $i$ such that $L$ can be embedded in $\Pi_{i}$. It
  remains to show that the accepting element used does not change the
  complexity. In fact, we will show that for any $X$, $Y \in \Pi_{i}$
  where $X \neq Y$, we can design a comparator circuit family over
  $\Pi_{i}$ that accepts \lang{M} using $X$ and rejects using $Y$. Let
  $A$ and $B$ be the elements in $\Pi_{i}$ that $a$ and $b$ gets
  mapped to by this embedding ($B < A$).  Then there exists a circuit
  family $C$ over $\Pi_{i}$, deciding \lang{M}, that accepts using $A$
  and rejects using $B$. We will construct a circuit family $C'$ over
  $\Pi_{i}$ from $C$ such that $C'$ uses $1$ to accept and $0$ to
  reject. Here 1 and 0 are the maximum and minimum elements in
  $\Pi_i$. Now if we let $x$ be the output of a circuit in the circuit
  family $C$, we can construct $C'$ by computing $\checkgeb{A}(x)$
  (See Proposition~\ref{prop:checkgeb}). Similarly, we can construct a
  circuit family $C''$ that accepts using 0 and rejects using 1 by
  reducing the language \lang{M} to $\overline{\MCVP}$ and then
  applying the construction in Lemma~\ref{lem:P} and then computing
  $\checkgeb{A}(x)$ on the output of this circuit. The required
  circuit family is then the one computing $(X \wedge C') \vee (Y
  \wedge C'')$.
\end{proof}

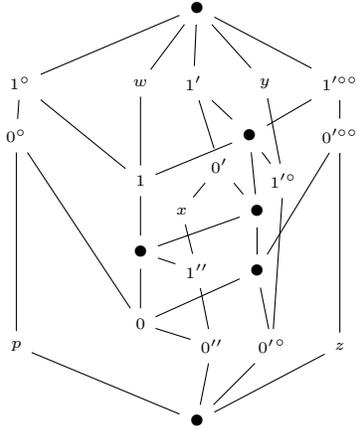
\begin{figure}
\centering
\begin{tikzpicture}
 \node (K)  at (3.6, 3) {\textbullet};

 \node (A)  at (1.2, 4) {\smalllabel{$p$}};
 \node (L) at (3.8, 4) {\smalllabel{$0''$}};
 \node (I) at (4.6, 4) {\smalllabel{${0'}^{\circ}$}};
 \node (C) at (5.5, 4) {\smalllabel{$z$}};

 \node (Y)at (2.85, 4.3) {\smalllabel{0}};

 \node (Z) at (3.6, 5) {\smalllabel{$1''$}};
 \node (J) at (4.4, 5) {\textbullet};

 \node (W) at (2.85, 5.25) {\textbullet};

 \node (S) at (3.4, 5.8) {\smalllabel{$x$}};
 \node (V) at (4.4, 5.8) {\textbullet};

 \node (X)  at (2.85, 6.2) {\smalllabel{1}};
 \node (T)at (3.9, 6.4) {\smalllabel{$0'$}};
 \node (G) at (4.75, 6.2) {\smalllabel{$1'^{\circ}$}};

 \node (B) at (1.2, 6.8) {\smalllabel{$0^{\circ}$}};
 \node (H) at (4.3, 6.8) {\textbullet};
 \node (D) at (5.5, 6.8) {\smalllabel{$0'^{\circ\circ}$}};

 \node (P) at (1.25, 7.5) {\smalllabel{$1^\circ$}};
 \node (M) at (2.85, 7.5) {\smalllabel{$w$}};
 \node (U) at (3.55555, 7.5) {\smalllabel{$1'$}};
 \node (E) at (4.5, 7.5) {\smalllabel{$y$}};
 \node (R) at (5.5, 7.5) {\smalllabel{$1'^{\circ\circ}$}};

 \node (F) at (3.6, 8.5) {\textbullet};

 \draw (Y) -- (B);
 \draw (X) -- (P);

 \draw (K) -- (A);
 \draw (K) -- (L);
 \draw (K) -- (I);
 \draw (K) -- (C);

 \draw (A) -- (B);

 \draw (Y) -- (W);
 \draw (Y) -- (J);
\draw (Y) -- (L);

 \draw (L) -- (Z);

 \draw (I) -- (J);
\draw (I) -- (G);

 \draw (C) -- (D);

 \draw (Z) -- (W);
\draw (Z) -- (S);

 \draw (J) -- (V);
 \draw (J) -- (D);

\draw (W) -- (X);
 \draw (W) -- (V);

 \draw (S) -- (T);

\draw (V) -- (T);
 \draw (V) -- (H);


 \draw (X) -- (H);
 \draw (X) -- (M); 

\draw (T) -- (U);

\draw (G) -- (H);
 \draw (G) -- (E);

 \draw (B) -- (P); 


 \draw (H) -- (U);
 \draw (H) -- (R); 

\draw (D) -- (R);

\draw (P) -- (F);
 \draw (M) -- (F);
\draw (U) -- (F);
 \draw (E) -- (F);
 \draw (R) -- (F);

\end{tikzpicture}
\caption{The lattice for simulating \P}
\label{fig:P-lattice}
\end{figure}

\begin{figure}
\centering
\scalebox{0.5}{%
\begin{tikzpicture}
\node (c1) at (1, 7) {\textbullet};
\node (c2) at (1, 6) {};
\draw [-biggertip] (c1.center) -- (c2.center);
\node (c3) at (3, 6) {\textbullet};
\node (c4) at (3, 5) {};
\draw [-biggertip] (c3.center) -- (c4.center);
\node (c5) at (5, 5) {};
\node (c6) at (5, 4) {\textbullet};
\draw [-biggertip] (c6.center) -- (c5.center);
\node (c7) at (7, 4) {\textbullet};
\node (c8) at (7, 3) {};
\draw [-biggertip] (c7.center) -- (c8.center);
\node (c9) at (9, 3) {};
\node (c10) at (9, 2) {\textbullet};
\draw [-biggertip] (c10.center) -- (c9.center);
\node (c11) at (11, 6) {\textbullet};
\node (c12) at (11, 1) {};
\draw [-biggertip] (c11.center) -- (c12.center);
\node (c13) at (13, 1) {};
\node (c14) at (13, 0) {\textbullet};
\draw [-biggertip] (c14.center) -- (c13.center);

\node (a) at (-1, 7) {\Large$(a)$};
\draw (0,7) -- ++(15, 0);
\node (b) at (-1, 6) {\Large$(b)$};
\draw (0,6) -- ++(15, 0);
\node (x) at (-1, 5) {\Large$x$};
\draw (0,5) -- ++(15, 0);
\node (y) at (-1, 4) {\Large$y$};
\draw (0,4) -- ++(15, 0);
\node (z) at (-1, 3) {\Large$z$};
\draw (0,3) -- ++(15, 0);
\node (w1) at (-1, 2) {\Large$(o_1) w$};
\draw (0,2) -- ++(15, 0);
\node (p) at (-1, 1) {\Large$p$};
\draw (0,1) -- ++(15, 0);
\node (w2) at (-1, 0) {\Large$(o_2) w$};
\draw (0,0) -- ++(15, 0);

\end{tikzpicture}
}
\caption{Copy $a \vee b$ into $o_1$ and $o_2$}
\label{fig:copy}
\end{figure}
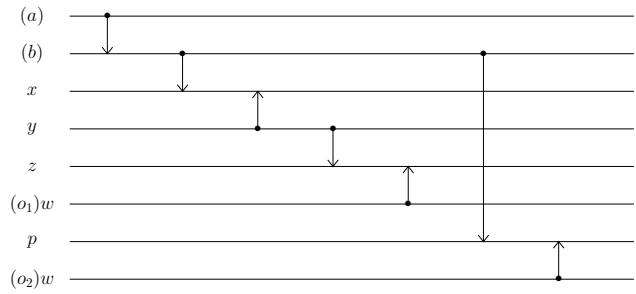

If we can show that there exists a finite distributive lattice such
that the poset in Figure~\ref{fig:P-poset} can be embedded in that
lattice while preserving all existing meets and joins, then $\P =
\CC$. In the following theorem, we show that such an embedding is not
possible.

\begin{theorem}
  The poset in Figure~\ref{fig:P-poset} cannot be embedded into any
  distributive lattice while preserving all meets and joins.
  \label{thm:impossibility}
\end{theorem}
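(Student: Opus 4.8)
The plan is to argue by contradiction. Suppose there were a distributive lattice $L$ together with an order embedding $f$ of the poset $P$ of Figure~\ref{fig:P-poset} into $L$ preserving every meet and join that exists in $P$. Then the images under $f$ must satisfy \emph{all} the meet--join identities listed in the proof of Lemma~\ref{lem:P}, since each such identity records a genuine (unique) meet or join in $P$ and is therefore preserved by $f$. My first step is to rewrite the relevant identities as equations among images, writing $\mathbf{0},\mathbf{1},X,P,W$ for $f(0),f(1),f(x),f(p),f(w)$, and so on.

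The key observation is that in a distributive lattice a composite of meets and joins with fixed elements collapses, and for this I only need the \emph{second} copy of the gadget. Its recovery asserts $((0\wedge x)\vee p)\wedge w = 0$ and $((1\wedge x)\vee p)\wedge w = 1$ in $P$. Passing to $L$ and distributing, the first equation becomes $(\mathbf{0}\wedge X\wedge W)\vee(P\wedge W)=\mathbf{0}$, which forces $P\wedge W\le\mathbf{0}$; distributing the second gives $(\mathbf{1}\wedge X\wedge W)\vee(P\wedge W)=\mathbf{1}$.

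Now I combine these. In the last equation both joinands lie below $\mathbf{0}\vee X$: indeed $\mathbf{1}\wedge X\wedge W\le X\le\mathbf{0}\vee X$ and $P\wedge W\le\mathbf{0}\le\mathbf{0}\vee X$. Hence $\mathbf{1}\le\mathbf{0}\vee X = f(0)\vee f(x)=f(0\vee x)=f(0')$, using that the join $0\vee x=0'$ is preserved. Since $f$ is an order embedding this yields $1\le_{P}0'$ in $P$. But in Figure~\ref{fig:P-poset} the elements $1$ and $0'$ are incomparable (neither lies below the other), which is the desired contradiction.

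The main obstacle is not the algebra, which is routine once distributivity is invoked, but isolating exactly which gadget identities and which incomparable pair of $P$ produce the clash: the leverage comes entirely from distributivity bounding the recovery expression $((t\wedge x)\vee p)\wedge w$ above by $0\vee x$, whereas in $P$ (and in its Dedekind--MacNeille completion) this expression genuinely separates $0$ from $1$ without collapsing $1$ below $0'$. As a cross-check one could instead exhibit an $N_5$ or $M_3$ sublattice forced among the images and appeal to the $M_3$--$N_5$ theorem cited in the introduction, but the direct computation above is shorter and pinpoints the obstruction to copying.
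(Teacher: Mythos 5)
Your proof is correct, and it takes a genuinely different route from the paper's. You stay inside the abstract distributive lattice and apply the distributive law directly to the preserved composite identities $((t\wedge x)\vee p)\wedge w = t$ for $t\in\{0,1\}$ (each step of which is indeed a genuine meet or join of the poset, so preservation applies), deducing first $f(p)\wedge f(w)\le f(0)$ and then $f(1)\le f(0)\vee f(x) = f(0')$. The paper instead invokes Birkhoff's representation theorem (Theorem~\ref{thm:birkhoff}) to realize the target lattice concretely as a lattice of sets, and then chases the elements of $B\setminus A$ (where $A, B$ are the sets representing $0, 1$) through the set identities $A = (A''\cup P)\cap W$ and $B = (B''\cup P)\cap W$, concluding $B\subseteq A'$. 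Both arguments are powered by the same ``second copy'' recovery identities of Lemma~\ref{lem:P}, and both terminate at exactly the same obstruction: the order embedding would force $1\le 0'$, contradicting the incomparability of $1$ and $0'$ in Figure~\ref{fig:P-poset}. The difference in ingredients is that you use preservation of the meets $0\wedge x=0''$, $1\wedge x=1''$ and the join $0\vee x=0'$, whereas the paper treats $0'',1''$ as opaque labels and uses only the order relations $1''<0'$ and $0<0'$. What your version buys is that it is purely equational and needs no representation theorem at all; in particular it applies verbatim to embeddings into arbitrary (possibly infinite) distributive lattices, whereas the paper's appeal to Theorem~\ref{thm:birkhoff}, as stated for \emph{finite} distributive lattices, strictly speaking covers only finite targets (the general case needs the Birkhoff--Stone extension). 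What the paper's version buys is that the set-theoretic element chase requires no algebraic manipulation beyond membership reasoning, which some readers may find more transparent than tracking which joinand is bounded by what.
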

\begin{proof}
  We use proof by contradiction. Assume that such an embedding
  exists. Then by Birkhoff's representation theorem, the elements of
  the poset in Figure~\ref{fig:P-poset} can be labelled by finite sets
  such that lub and glb operations over the embedding distributive
  lattice correspond to union and intersection of these sets
  respectively. We will denote the set labelling each element of the
  poset in Figure~\ref{fig:P-poset} by the corresponding uppercase
  letter except that 0, 1, $0'$, $1'$, $0''$ and $1''$ are labelled by
  $A$, $B$, $A'$, $B'$, $A''$ and $B''$ respectively.

  Let $\{ x_{1}, \ldots ,x_{k} \} = B\setminus A$. Since $A \subset
  B$, we have $k \geq 1$. Our first goal is to prove that all of these
  $x_{i}$ must be in $B''$ too.  Since $B \subset W$, we have for all
  $i$ that $x_{i} \in W$. Now suppose for contradiction that there
  exists an $i$ such that $x_{i} \in P$, then we can conclude that
  $x_{i} \in A$ since $A = (A'' \cup P) \cap W$. So for all $i$ we
  have $x_{i} \notin P$. Since $B = (B'' \cup P) \cap W$ and $x_{i}
  \notin P$, we have $x_{i} \in B''$. But then for all $i$ we have
  $x_{i} \in A'$ as $A' \supset B''$. So $A' \supseteq B$ which is a
  contradiction since $A'$ and $B$ are incomparable.
\end{proof}


\section{Comparator Circuits over Bounded Posets}
\label{sec:NP}
In this section, we consider the most general form of comparator
circuits, i.e., we consider comparator circuits over fixed finite
bounded posets. We show that the resulting complexity class is 
the class \NP.

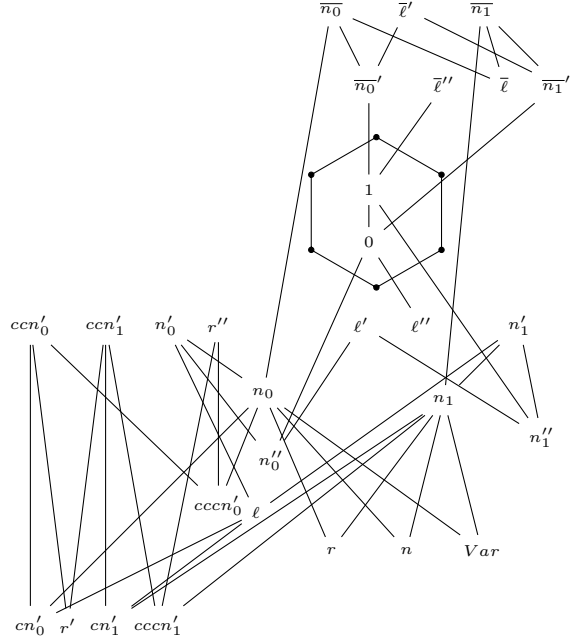
\begin{figure}
\centering
\begin{tikzpicture}
 \node (Y)at (5.5, 8.6) {\smalllabel{0}};
 \node (X)  at (5.5, 9.3) {\smalllabel{1}};

 \draw (Y) -- (X);

 \begin{scope}[shift={(5.6,8)}]
   \draw[fill,rotate=30]
   \foreach \a in {0,60,...,300} { 
     -- ++(\a:1cm) circle (1pt)

   }-- (0:1cm);
 \end{scope}

 \node (NA) at (1, 7.5) {\smalllabel{$ccn_0'$}};
 \node (NB) at (3.5, 7.5) {\smalllabel{$r''$}};
 \node (NC) at (1, 3.5) {\smalllabel{$cn_0'$}};
 \node (ND) at (1.5, 3.5) {\smalllabel{$r'$}};
 \node (NE) at (2, 3.5) {\smalllabel{$cn_1'$}};
 \node (NF) at (2.7, 3.5) {\smalllabel{$cccn_1'$}};
 \node (NG) at (2, 7.5) {\smalllabel{$ccn_1'$}};
 \node (NH) at (3.5, 5.1) {\smalllabel{$cccn_0'$}};
 \node (NI) at (4.1, 6.6) {\smalllabel{$n_0$}};
 \node (NJ) at (2.8, 7.5) {\smalllabel{$n_0'$}};
 \node (NK) at (5.5, 10.7) {\smalllabel{$\overline{n_0}'$}};
 \node (NL) at (5, 11.7) {\smalllabel{$\overline{n_0}$}};
 \node (NM) at (6, 11.7) {\smalllabel{$\overline{\ell}'$}};
 \node (NN) at (7, 11.7) {\smalllabel{$\overline{n_1}$}};
 \node (NO) at (8, 10.7) {\smalllabel{$\overline{n_1}'$}};
 \node (NP) at (6.5, 10.7) {\smalllabel{$\overline{\ell}''$}};
 \node (NQ) at (7.3, 10.7) {\smalllabel{$\overline{\ell}$}};
 \node (NR) at (6.2, 7.5) {\smalllabel{$\ell''$}};
 \node (NS) at (5.4, 7.5) {\smalllabel{$\ell'$}};
 \node (NT) at (4.2, 5.7) {\smalllabel{$n_0''$}};
 \node (NU) at (5, 4.5) {\smalllabel{$r$}};
 \node (NV) at (7.8, 6) {\smalllabel{$n_1''$}};
 \node (NW) at (7.5, 7.5) {\smalllabel{$n_1'$}};
 \node (NX) at (6.5, 6.5) {\smalllabel{$n_1$}};
 \node (NY) at (7, 4.5) {\smalllabel{$Var$}};
 \node (NZ) at (6, 4.5) {\smalllabel{$n$}};
 \node (NAA) at (4, 5) {\smalllabel{$\ell$}};

 \draw (NA) -- (NC);
 \draw (NA) -- (ND);
 \draw (NA) -- (NH);

 \draw (NB) -- (NH);
 \draw (NB) -- (NF);

 \draw (NC) -- (NI);

 \draw (ND) -- (NG);

 \draw (NE) -- (NAA);
 \draw (NE) -- (NG);
 \draw (NE) -- (NX);

 \draw (NF) -- (NG);
 \draw (NF) -- (NX);

 \draw (NH) -- (NI);

 \draw (NI) -- (NJ);
 \draw (NI) -- (NU);
 \draw (NI) -- (NZ); 
 \draw (NI) -- (NY);
 \draw (NI) -- (NL);
 
 \draw (NJ) -- (NT);
 \draw (NJ) -- (NAA);

 \draw (NK) -- (NL);
 \draw (NK) -- (NM);
 \draw (NK) -- (X);

 \draw (NL) -- (NQ);

 \draw (NM) -- (NO);

 \draw (NN) -- (NO);
 \draw (NN) -- (NQ);

 \draw (NO) -- (Y);

 \draw (NP) -- (X);

 \draw (NR) -- (Y); 

 \draw (NS) -- (NV);

 \draw (NT) -- (Y);
 
 \draw (NU) -- (NX); 

 \draw (NV) -- (NW);
 \draw (NV) -- (X);

 \draw (NW) -- (NX);

 \draw (NX) -- (NY);
 \draw (NX) -- (NZ);

 \draw (NT) -- (NS);
 \draw (NC) -- (NAA);
 \draw (NAA) -- (NW);
 \draw (NX) -- (NN);

\end{tikzpicture}
\caption{The poset for simulating \NP}
\label{fig:NP-poset}
\end{figure}

\begin{theorem}
  Let $P$ be any poset and let $a \in P$ be an arbitrary element in
  $P$, then $\PCC{P}{a} \subseteq \NP$. Also, there exists a finite
  poset $P$ and an $a \in P$ such that $\NP = \PCC{P}{a}$.
\label{thm:NP}
\end{theorem}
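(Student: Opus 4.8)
The plan is to prove the two containments separately. For the upper bound $\PCC{P}{a} \subseteq \NP$, I would start from an $\L$-uniform comparator circuit family $\{C_n\}$ over the fixed finite poset $P$ and observe that on input $x$ the circuit $C_{|x|}$ can be produced in logspace, hence in polynomial time. Since $P$ is fixed and finite, each gate $(w_i,w_j)$ chooses its updated values from the set of maximal lower bounds $val(w_i) \wedge val(w_j)$ and the set of minimal upper bounds $val(w_i) \vee val(w_j)$, each of which is a subset of $P$ and so has at most $|P|$ elements. Thus an entire non-deterministic run of $C_{|x|}$ is specified by $O(m)$ choices, each over a set of constant size at most $|P|^2$, where $m$ is the polynomial number of gates. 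An $\NP$ machine can guess all of these choices, evaluate the circuit deterministically in polynomial time, and accept iff $val(w_1) = a$; since by Definition~\ref{def:comp} the family accepts exactly when \emph{some} non-deterministic path outputs $a$, this machine decides the same language.

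For the hardness direction I would exhibit the poset $P$ of Figure~\ref{fig:NP-poset} together with an appropriate accepting element $a$ (the top element $1$) and show $\NP \subseteq \PCC{P}{a}$. Fix $L \in \NP$ with a polynomial-time verifier, so that for $|x| = n$ we have $x \in L$ iff there exists $y \in \{0,1\}^{m(n)}$ with $B_n(x,y) = 1$, where $\{B_n\}$ is a logspace-uniform family of Boolean circuits (negations pushed to the inputs). The comparator circuit over $P$ would have two parts. First, each certificate bit $y_j$ is produced by a \emph{guess gadget} that exploits the defining feature of a non-lattice poset: there is a pair of elements whose join has two distinct minimal upper bounds, one of which can be converted into the constant $0$ and the other into the constant $1$ of $P$ (using converter identities in the style of Lemma~\ref{lem:P}). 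A single comparator gate then sets a designated line non-deterministically to $0$ or to $1$. Second, feeding these guessed lines together with the bits of $x$ (hard-wired as $0/1$) into a simulation of $B_n$ exactly as in Lemma~\ref{lem:P} --- meets and joins implementing $\wedge$ and $\vee$, and the copy gadget implementing fan-out --- makes the output line carry $1$ iff $B_n(x,y)=1$ for the guessed $y$. Because acceptance is existential over non-deterministic paths, the circuit has a path outputting $a$ on $w_1$ iff some $y$ satisfies $B_n(x,y)=1$, i.e. iff $x \in L$.

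The step I expect to be the main obstacle is designing the order relations of $P$ so that \emph{exactly} the intended non-determinism appears. Concretely, I would need to verify a list of meet/join identities in $P$, analogous to those of Lemma~\ref{lem:P}, establishing that (i) the guess gadget's two branches yield precisely the clean constants $0$ and $1$ and nothing that could leak into the simulation; (ii) every meet and join used by the Boolean-simulation part is single-valued on the relevant elements, so that once the $y_j$ are fixed the simulation is effectively deterministic and faithfully computes $B_n(x,y)$; and (iii) no wrong non-deterministic choice anywhere can fabricate the value $1$ on $w_1$ when $B_n(x,y) \neq 1$. The delicate point is the soundness condition (iii): since accepting paths are selected existentially, I must guarantee that spurious branches either reproduce legitimate $0/1$ values or propagate to a non-accepting output, so that false positives are impossible. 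Finally I would note that this construction genuinely requires $P$ to fail to be a lattice, since the non-unique join is what encodes the existential guess; any meet- and join-preserving completion of $P$ into a lattice would make every gate deterministic and collapse the captured class to \P, matching the observation that such a poset cannot capture \NP\ unless $\P = \NP$.
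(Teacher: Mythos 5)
Your upper-bound argument ($\PCC{P}{a} \subseteq \NP$ by guessing the polynomially many non-deterministic choices, each from a set of size at most $|P|^2$, and evaluating deterministically) is correct, and is in fact spelled out in more detail than in the paper, which treats this direction as immediate. Your hardness direction also follows the same overall strategy as the paper's proof of Theorem~\ref{thm:NP}: use a non-lattice poset whose non-unique minimal upper bounds implement an existential guess, convert the guessed element into the constants $0$/$1$ via identities in the style of Lemma~\ref{lem:P}, and then run the Boolean simulation (meets, joins, copy gadget) on the guessed bits.

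However, there is a genuine gap in your construction: you never explain how the lines carrying the \emph{negated} certificate literals $\overline{y_j}$ are produced. You assume the verifier circuit $B_n$ has negations pushed to the inputs, so its inputs include $\overline{y_j}$; for the hard-wired bits of $x$ this is harmless, but for the guessed bits it is the crux of the whole theorem. Your guess gadget sets a single designated line to $0$ or $1$; if the simulation only receives the lines $y_j$ (and not consistent complements), the simulated circuit is monotone in the certificate, and then the existential semantics of the comparator circuit decides exactly the language $\{x : B_n(x,1^m)=1\}$, which is in $\P$ --- monotone circuit satisfiability is trivial. Nor can you guess $\overline{y_j}$ on a separate line and ``check consistency'' inside the circuit, since over the constants $0 \leq 1$ the available operations are monotone and cannot express the required XOR-type check. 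This is precisely why the bulk of the paper's proof consists of the identity sequences illustrated in Figure~\ref{fig:nondet}: the raw guessed value ($n_0$ or $n_1$) is first converted to $0$/$1$ for use as $x_i$, then \emph{restored} via the $cn$-chain of identities, and finally pushed through the $\overline{\ell}, \overline{\ell}', \overline{\ell}''$ elements, whose meets deliberately swap the outcomes ($\overline{\ell}'' \wedge \overline{n_0}' = 1$, $\overline{\ell}'' \wedge \overline{n_1}' = 0$) so that $\overline{x_i}$ is forced to be the complement of $x_i$ on every non-deterministic path. Designing the poset of Figure~\ref{fig:NP-poset} so that this restore-and-complement mechanism exists (and does not create spurious accepting paths) is the real content of the theorem; your conditions (i)--(iii) gesture at soundness but miss this specific, and unavoidable, obstacle.
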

\begin{proof}
  First, we prove that there exists a poset \poset{P} and an accepting
  element $a \in \poset{P}$ such that $\NP \subseteq \PCC{P}{a}$. Let
  \poset{P} be the poset in Figure~\ref{fig:NP-poset}. We will reduce
  the well-known \NP-complete problem \SAT\ into \CCVP{P}{a}. Without loss of generality, we can assume that the circuit does not contain any NOT gates.

  Note that the poset \poset{P} contains the poset in the proof of
  Theorem~\ref{thm:P}. This is represented by the hexagon in
  Figure~\ref{fig:NP-poset}. The elements marked 0 and 1 inside this
  hexagon are the elements marked 0 and 1 in Figure~\ref{fig:P-poset}.
  This containment ensures that we can implement all operations that
  we used while simulating \MCVP\ to be used here as well.  Let
  \circuit{C} be the input to the \SAT\ problem. The 0 and 1 values
  carried by wires will be represented by 0 and 1 in \poset{P} as in
  the proof of Theorem~\ref{thm:P}. The non-trivial part is to
  simulate the input variables $x_1, \ldots , x_n$. These input
  variables to \circuit{C} are handled by non-deterministically
  generating 0 or 1 (of \poset{P}) on the lines corresponding to the
  wires attached to these input gates. We also have to ensure that
  when we non-deterministically generate the values of input variables,
  the values generated for $x_i$ and $\overline{x_i}$ are
  consistent. This is ensured by generating $x_i$ non-deterministically
  and then complementing the generated value to get
  $\overline{x_i}$. The fan-out operation is implemented as in the
  proof of Theorem~\ref{thm:P}.

  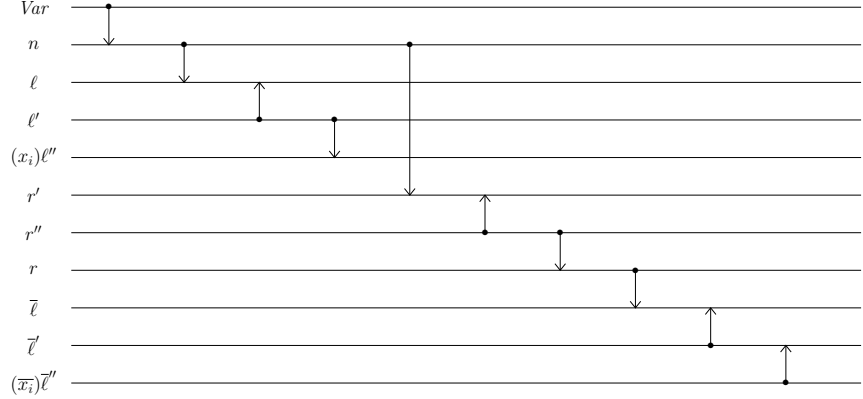
\begin{figure}
\centering
\scalebox{0.5}{%
\begin{tikzpicture}
\node (c1) at (1, 10) {\textbullet};
\node (c2) at (1, 9) {};
\draw [-biggertip] (c1.center) -- (c2.center);
\node (c3) at (3, 9) {\textbullet};
\node (c4) at (3, 8) {};
\draw [-biggertip] (c3.center) -- (c4.center);
\node (c5) at (5, 8) {};
\node (c6) at (5, 7) {\textbullet};
\draw [-biggertip] (c6.center) -- (c5.center);
\node (c7) at (7, 6) {};
\node (c8) at (7, 7) {\textbullet};
\draw [-biggertip] (c8.center) -- (c7.center);
\node (c9) at (9, 5) {};
\node (c10) at (9, 9) {\textbullet};
\draw [-biggertip] (c10.center) -- (c9.center);
\node (c11) at (11, 4) {\textbullet};
\node (c12) at (11, 5) {};
\draw [-biggertip] (c11.center) -- (c12.center);
\node (c13) at (13, 3) {};
\node (c14) at (13, 4) {\textbullet};
\draw [-biggertip] (c14.center) -- (c13.center);
\node (c15) at (15, 2) {};
\node (c16) at (15, 3) {\textbullet};
\draw [-biggertip] (c16.center) -- (c15.center);
\node (c17) at (17, 2) {};
\node (c18) at (17, 1) {\textbullet};
\draw [-biggertip] (c18.center) -- (c17.center);
\node (c19) at (19, 1) {};
\node (c20) at (19, 0) {\textbullet};
\draw [-biggertip] (c20.center) -- (c19.center);

\node (var) at (-1, 10) {\Large$\var{Var}$};
\draw (0,10) -- ++(21, 0);
\node (n) at (-1, 9) {\Large$n$};
\draw (0,9) -- ++(21, 0);
\node (L) at (-1, 8) {\Large$\ell$};
\draw (0,8) -- ++(21, 0);
\node (LP) at (-1, 7) {\Large$\ell'$};
\draw (0,7) -- ++(21, 0);
\node (LPP) at (-1, 6) {\Large$(x_i) \ell''$};
\draw (0,6) -- ++(21, 0);
\node (RP) at (-1, 5) {\Large$r'$};
\draw (0,5) -- ++(21, 0);
\node (RPP) at (-1, 4) {\Large$r''$};
\draw (0,4) -- ++(21, 0);
\node (R) at (-1, 3) {\Large$r$};
\draw (0,3) -- ++(21, 0);
\node (LB) at (-1, 2) {\Large$\overline{\ell}$};
\draw (0,2) -- ++(21, 0);
\node (LBP) at (-1, 1) {\Large$\overline{\ell}'$};
\draw (0,1) -- ++(21, 0);
\node (LBPP) at (-1, 0) {\Large$(\overline{x_i}) \overline{\ell}''$};
\draw (0,0) -- ++(21, 0);

\end{tikzpicture}
}
\caption{Nondeterministically generate $x_i$ and $\overline{x_i}$}
\label{fig:nondet}
\end{figure}

  Note that the minimal upper bounds for the elements $\mathit{Var}$
  and $n$ in the poset \poset{P} are $n_0$ and $n_1$. These values
  stand for a non-deterministically generated 0 and 1 resp. Now for
  each variable $x_i$ we take the minimal upper bound of these two
  elements in \poset{P} to non-deterministically generate the value of
  $x_i$. The only thing that remains to be done is to make the
  corresponding $\bar{x_i}$ variable consistent, i.e., when a 0 is
  generated non-deterministically for $x_i$, we have to ensure that all
  lines carrying $\bar{x_i}$ in that non-deterministic path carry the
  value 0. The sequence of meet and join identities that we are going
  to describe can be used to implement this
  computation. Figure~\ref{fig:nondet} shows how to generate $x_i$ and
  $\overline{x_i}$ consistently in a non-deterministic fashion using
  the identities given below.

  The following identity enables us to non-deterministically generate a
  0 or a 1. Note that we are only generating $n_0$ and $n_1$ at this
  point. But we will later convert this into 0 or 1 that are used for
  implementing the Boolean operations.

  \begin{align*}
    \var{Var} \vee n &= \{ n_0, n_1 \}
  \end{align*}
  
  Now we use the following identities to convert $n_0$ or $n_1$ into a 0
  or a 1 respectively.
  \begin{align*}
    \ell \vee n_0 &= n'_{0} & \ell \vee n_1 &= n'_{1}\\ 
    \ell' \wedge n'_{0} &= n''_{0} & \ell' \wedge n'_{1} &= n''_{1}\\ 
    \ell'' \vee n''_{0} &= 0 & \ell'' \vee n''_{1} &= 1\\  
  \end{align*}
  
  Note that the original $n_0$ or $n_1$ that was generated will be
  destroyed by the above sequence of operations (By doing $\ell \wedge
  n_0$ for ex.). Using the following identities, we ensure that the
  original value generated non-deterministically is restored.
  \begin{align*}
    \ell \wedge n_0 &= \var{cn_0^{'}} & \ell \wedge n_1 &= \var{cn_1^{'}}\\ 
    r' \vee \var{cn_0{'}} &= \var{ccn_0{'}} & r' \vee \var{cn_1{'}} &= \var{ccn_1{'}}\\ 
    r'' \wedge \var{ccn_0{'}} &= \var{cccn_0{'}} & r'' \wedge \var{ccn_1{'}} &= \var{cccn_1{'}}\\ 
    r \vee \var{cccn_0{'}} &= n_0 & r \vee \var{cccn_1{'}} &= n_1\\ 
  \end{align*}
  
  Now we use the restored value along with the following identities to
  generate the value for the line carrying $\overline{x_i}$.
  \begin{align*}
    \overline{\ell} \vee n_0 &= \overline{n_{0}} & \overline{\ell} \vee n_1 &= \overline{n_{1}}\\ 
    \overline{\ell}' \wedge \overline{n_{0}} &= \overline{n_{0}}' & \overline{\ell}' \wedge \overline{n_{1}} &= \overline{n_{1}}'\\ 
    \overline{\ell}'' \wedge \overline{n_{0}}' &= 1 & \overline{\ell}'' \wedge \overline{n_{1}}' &= 0\\
  \end{align*}

  The reduction from \SAT{} is as follows. First, we use the reduction
  from \CVP{} to \PCC{P}{a} to construct a \PCC{P}{a} circuit, say $C$,
  that evaluates the input formula. Then, we construct a circuit for
  non-deterministically generating 0/1 values for all the variables in
  the formula. The wires of this circuit that carry the generated
  values are then connected to the input wires in $C$ that take the
  values of variables in the formula as input. It is easy to see that
  the resulting circuit evaluates to 1 if and only if the formula is
  satisfiable.

  To see that \PCC{P}{a} is in \NP{}, observe that we can evaluate any
  \PCC{P}{a} circuit in \NP{} by guessing the output value of each
  gate to be one of the possible values. i.e., if the gate is an OR
  (AND) gate taking $a$ and $b$ as input, we non-deterministically
  guess that the gate outputs one of the values in $a\vee b$ ($a\wedge
  b$). Finally, we simply check whether the value on the output wire
  is in the accepting set.
\end{proof}
\section{Skew Comparator Circuits}

In this section, we study the skew comparator circuits defined in the
preliminaries.  We show that \skewcc\ is the class \L. Recall that the
class \NL\ can be characterized as the set of all languages computed
by logpsace-uniform Boolean circuits with skewed AND gates. So the
result in this section draws a parallel between the $\P$ vs $\CC$
problem and the $\NL$ vs $\L$ problem. It immediately follows that
\skewcc\ over distributive lattices also characterize the class $\L$.

We begin by considering a canonical complete problem for the class
$\L$.
The language \dgap\ consists of all tuples $(G, s, t)$ where $G = (V,
E)$ is a directed graph where each vertex has out-degree at most one
and $s, t \in V$ and there is a directed path from $s$ to $t$.
%
%
We use a variant of $\dgap$ problem in our setting. The variant
(called $\dgapp$) is that the out-degree constraint is not applied to
$s$.
It is easy to see that $\dgapp$ is also in $\L$. Indeed, for each
neighbour $u$ of $s$, run the \dgap\ algorithm to check whether $t$ is
reachable from $u$.

\begin{theorem}
  $\skewcc = \L$
  \label{thm:L}
\end{theorem}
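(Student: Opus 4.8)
The statement is an equality of classes, so I would prove two containments. The plan is to show $\skewcc \subseteq \L$ by giving a logspace evaluation algorithm for \skewccvp, and to show $\L \subseteq \skewcc$ by reducing the canonical complete problem $\dgapp$ to a skew comparator circuit computation. Since Proposition~\ref{prop:univ-cir} and its companion completeness result already give that \skewccvp\ (the evaluation problem for skew comparator circuits) is complete for \skewcc, the first containment reduces to showing \skewccvp\ $\in \L$, and the second reduces to building an $\L$-uniform skew comparator circuit family that decides $\dgapp$.

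For the direction $\skewcc \subseteq \L$, I would exploit the defining skewness condition: every \emph{used} AND gate has at least one input that is a constant $0$/$1$ or a literal $x_i/\overline{x_i}$. The key observation is that in a skew comparator circuit the value on each line, read left-to-right, can only be updated in a restricted way, so that evaluating the output line does not require storing the whole line-value vector. Concretely, I would argue that the value carried by the output line $w_1$ at the end is determined by tracing a single ``thread'' of dependencies backwards through the gate list: each gate either leaves the relevant line's value unchanged (when the AND output is unused) or combines it with a known skew input (a constant or literal). Because an AND output being used can be checked in logspace from the ordered-list representation (as noted in the preliminaries), and because following the thread only requires tracking one line index and one accumulated value at a time, the whole evaluation fits in $O(\log n)$ space. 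The main care here is to confirm that skewness really does limit the ``information flow'' to a single traceable path rather than a branching computation — that is the crux that separates \skewcc\ from full \CC\ (whose evaluation is $\CC$-complete, not known to be in \L).

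For the direction $\L \subseteq \skewcc$, I would reduce $\dgapp$ to skew comparator circuit evaluation. Given an instance $(G,s,t)$ of $\dgapp$ with out-degree at most one except at $s$, I would build a comparator circuit with one line per vertex, initialized so that $s$ carries $1$ and all other vertices carry $0$. Reachability from $s$ then propagates along edges: for each edge $(u,v)$ I would install a gate that sets $val(v)$ to the OR of $val(u)$ and $val(v)$, i.e. I use the OR (join) output on the $v$-line while the AND output goes to a dead line. Because each non-source vertex has out-degree at most one, each line feeds at most one outgoing OR, so with a careful ordering of the gates (respecting a topological-like sweep, repeated enough times to cover paths of length up to $|V|$) the reachability bit reaches $t$. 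The essential point is that every used AND gate in this construction takes a constant or a single-source literal as one input, so all used AND gates are skew and the circuit is genuinely a \emph{skew} comparator circuit; this is exactly why $\dgapp$ (with the relaxed out-degree at $s$) rather than general reachability is the right complete problem. Finally I would check that the circuit family is $\L$-uniform, which is routine since the circuit is read directly off the graph.

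I expect the main obstacle to be the $\skewcc \subseteq \L$ direction: making rigorous the claim that skewness collapses the evaluation to a single logspace-traceable thread. General comparator-circuit evaluation is \CC-complete, so the argument must use the skew restriction in an essential way, and I would need to pin down precisely how an unused AND output versus a used (hence skew) AND output restricts the propagation of values so that no more than logarithmically many bits of state are ever needed.
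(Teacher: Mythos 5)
Both directions of your plan have genuine gaps. For $\skewcc \subseteq \L$, the key claim --- that skewness collapses evaluation to ``a single thread of dependencies'' traceable backwards from the output --- is false. Skewness constrains only \emph{AND} gates; a gate whose AND output is unused is, for all purposes, an OR gate, and its used output combines \emph{two} computed, non-constant values. So tracing backwards from the output line branches at every such gate: the dependency structure is a fan-in-two tree, not a path, and nothing in your sketch supplies a logspace tree search. The paper resolves exactly this by going \emph{forwards} instead: make a vertex for every wire, attach a virtual source $s$ to every input wire carrying $1$, and use skewness to orient the edges --- a used AND gate has a constant/literal side input, so it either kills the value (side input $0$, no edge to the AND output) or passes it through (side input $1$, edge to the AND output), while an OR output receives edges from both of its inputs. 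Since fan-out is $1$, every vertex except $s$ then has out-degree at most $1$, so ``the output wire carries $1$'' becomes precisely a $\dgapp$ instance, decidable in logspace by following the unique forward path from each neighbour of $s$. You use this reachability-propagation picture in your other direction, but it is in this direction that it is actually needed.

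For $\L \subseteq \skewcc$ there are two problems. Structurally, hardwiring a given instance $(G,s,t)$ into a circuit is a reduction from $\dgapp$ to $\skewccvp$; it shows $\skewccvp$ is $\L$-hard, not that $\L \subseteq \skewcc$. Membership in $\skewcc$ requires, for each $\lang{L} \in \L$, an $\L$-uniform family of skew circuits whose inputs are the bits of $x$ via annotated literal lines; your circuits contain no input variables at all (the graph is baked into the wiring, so each circuit computes a constant), and you cannot patch this by composing with the reduction $\lang{L} \leq \dgapp$, since closure of $\skewcc$ under logspace reductions is exactly what is at issue. The paper instead directly simulates a layered branching program for $\lang{L}$: the edges of the BP are labelled by literals, and the gadget ANDs the reachability ``token'' with $x_i$ or $\overline{x_i}$ --- which is simultaneously how the input enters the circuit and why every used AND gate is skew. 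Second, even viewed only as a reduction, your construction fails: a comparator gate for edge $(u,v)$ necessarily writes $u \wedge v$ back onto line $u$, so the source (whose out-degree may exceed $1$) loses its token after its first out-gate, and $t$'s token can be destroyed by $t$'s own out-edge; worse, in your ``repeated sweeps'' the gates of sweep $k \geq 2$ take two computed wires as inputs while their AND outputs are consumed by sweep $k+1$, so those AND gates are used but not skew, and the circuit you build is not a skew comparator circuit. (A single-sweep, unrolled, layered version with dead AND outputs can be made to work as a reduction, but that still only yields hardness, not the containment.)
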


\begin{proof}
  ($\subseteq$) Let $\lang{L} \in \skewcc$. We will prove that
  $\lang{L} \in \L$ by reducing \lang{L} to \dgapp. The reduction is
  as follows. Observe that we can reduce the language \lang{L} to
  \skewccvp\ by a logspace reduction (using the uniformity
  algorithm). Then we reduce \skewccvp\ to \dgapp\ as follows. Let $C$
  be an instance of \skewccvp. For each wire in $C$ add a vertex to
  the graph $G$. The vertex corresponding to the output wire is the
  destination vertex $t$. Add a source vertex $s$. The edges of $G$
  are as follows. For each vertex $v$ that corresponds to an input
  wire of $C$ having value $1$, add the edge $(s, v)$ to the
  graph. Now consider a comparator gate $g$ in $C$ with input wires
  $e_{1}$ and $e_{2}$ and AND output wire $e_{3}$ and OR output wire
  $e_{4}$. There are two cases.
  
  \begin{description}
  \item [Gate $g$ has an AND output] Without loss of generality, assume that $e_2$ is an input wire
    to $C$. If $e_2 = 1$, then add the edges $(e_{1}, e_{3})$ and
    $(e_{2}, e_{4})$ to $G$. If $e_{2} = 0$, then add the edge $(e_{1},
    e_{4})$ to the graph $G$.
    
  \item [Gate $g$ has an unused AND output] Add the edges $(e_{1},
    e_{4})$ and $(e_{2}, e_{4})$. Note that it is easy to check in
    logspace whether the AND output of a gate is used or not. Simply
    scan forward on the input to check whether any gate in the input
    after $g$ is incident on the AND output line of $g$ or not.
  \end{description}
  
  It is clear that $G$ has an $s$--$t$ path if and only if $C$ outputs $1$.  This
  follows from the observation that every vertex $v$ in $G$ where $v
  \neq s$ corresponds to a wire in $C$ and $v$ is reachable from $s$
  if and only if the wire corresponding to $v$ carries the value $1$.  All
  vertices other than $s$ in $G$ have out-degree at most
  $1$. Furthermore, the reduction can be implemented in logspace.

  ($\supseteq$) Let $\lang{L} \in \L$ and let $B$ be a poly-sized
  layered branching program deciding \lang{L}.  We will design a skew
  comparator circuit $C$ to simulate $B$.  Let $s$ be a state in $B$
  reading $x_{i}$ and let the edge labelled $1$ be directed towards a
  state $t$ and let the edge labelled $0$ be directed towards a state
  $u$. Then the gadget shown in Figure~\ref{fig:bp-to-cc} simulates
  this part of the BP $B$ (We say that this gadget corresponds to the
  state $s$). The truth table for this gadget is shown in the
  Table~\ref{tab:bp-to-cc}. This table assumes that the lines $t$ and
  $u$ carry the value $0$ initially. The value of the line labelled
  $s$ will be $1$ on input $x$ just before the gates in this gadget
  are evaluated if and only if the input $x$ reaches the state $s$ in
  $B$. It is clear that after all the gates in this gadget are
  evaluated, the value of the line labelled $t$ (or $u$) is $1$ if and
  only if the input $x$ reaches $t$ (or $u$ resp.) in $B$. 

  Now the circuit $C$ is as follows. For each state in $B$ introduce a line in
  $C$ and for each state in each layer from the first layer to the
  last layer, in that order, add the gates in the gadgets
  corresponding to these states in the same order to $C$. Note that
  the lines annotated $x_{i}$ and $\overline{x_{i}}$ in a gadget are
  only used in that gadget. When these values are required again, new
  annotated lines are used. The line corresponding to the accepting
  state is the output line. The initial value of lines corresponding
  to each state other than the start state of $B$ is $0$ and the
  initial value of the line corresponding to the start state is
  $1$. Also the circuit is a skew circuit since all used AND gates in
  the gadget are skew.  For establishing the correctness, we observe
  that the following claim holds. The circuit $C$ outputs 1 on input
  $x$ if and only if there is a path in $B$ from the start state to
  the accepting state on input $x$.
To complete the correctness proof, we prove the following claim:
  \begin{claim}
    The circuit $C$ outputs 1 on input $x$ if and only if there is a path in $B$
    from the start state to the accepting state on input $x$.
  \end{claim}
  \begin{proof}
    Let the $i^{\text{th}}$ \emph{block} of $C$ include all the
    gadgets corresponding to all the states in layer $i$ of $B$. We
    will prove the more general claim that after all gates up to and
    including the $i^{\text{th}}$ block are evaluated, if we consider
    all the lines that correspond to states in the
    ${(i+1)}^{\text{th}}$ layer of $B$, the only line that will have a
    value 1 will correspond to the state on ${(i+1)}^{\text{th}}$
    layer reached on input $x$. We will prove this by induction on the
    layer number.

    \paragraph{Base case: $i = 0$} Since there is only the start state
    in layer $1$ and it is initialized to the value 1, the base case
    is true.

    \paragraph{Induction} Assume that the claim is true for $i$. Let
    $s$ be the state in the ${(i+1)}^{\text{th}}$ layer that is
    reached by $x$ and let $t$ be the state in the
    ${(i+2)}^{\text{th}}$ that is reached by $x$. Now from the truth
    table in Table~\ref{tab:bp-to-cc} it is clear that after the
    gadget for state $s$ is evaluated the value of line $t$ will
    become $1$. Also, from the truth table, it is clear that the
    values of all the other lines that correspond to states in the
    ${(i+2)}^{\text{th}}$ layer remains $0$. Notice that all gates in
    block $i+1$ incident on $t$ are OR gates. So once the value of
    line $t$ becomes $1$, it remains so until block $i+2$.
  \end{proof}
  Let $s$ be the number of states in $B$. Then the number of lines in
  $C$ is at most $3s$ and the number of gates in $C$ is at most
  $4s$. Since $B$ is poly-size, so is $C$.
\begin{figure}[h]
  \begin{subfigure}[b]{0.4\textwidth}
    \begin{center}
      \begin{tabular}{ccc|cc}
        $s$ & $x_{i}$ & $\overline{x_{i}}$ & $t$ & $u$ \\ \hline
        0 & 1 & 0 & 0 & 0 \\
        0 & 0 & 1 & 0 & 0 \\
        1 & 1 & 0 & 1 & 0 \\
        1 & 0 & 1 & 0 & 1 \\
      \end{tabular}
    \end{center}
    \caption{Truth Table for the gadget for BPs}
    \label{tab:bp-to-cc}
  \end{subfigure}
  \hfill
  \begin{subfigure}[b]{0.4\textwidth}
    \begin{center}
      \scalebox{0.5}{%
        \begin{tikzpicture}
          \node (c1) at (1, 3) {\textbullet};
          \node (c2) at (1, 2) {};
          \draw [-biggertip] (c1.center) -- (c2.center);
          \node (c3) at (3, 3) {\textbullet};
          \node (c4) at (3, 0) {};
          \draw [-biggertip] (c3.center) -- (c4.center);
          \node (c7) at (5, 2) {\textbullet};
          \node (c8) at (5, 1) {};
          \draw [-biggertip] (c7.center) -- (c8.center);
          \node (c9) at (7, 2) {\textbullet};
          \node (c10) at (7, -1) {};
          \draw [-biggertip] (c9.center) -- (c10.center);

          \node (s) at (-1, 3) {\Large$s$};
          \draw (0,3) -- ++(9, 0);
          \node (xi) at (-1, 2) {\Large$x_{i}$};
          \draw (0,2) -- ++(9, 0);
          \node (xib) at (-1, 1) {\Large$\overline{x_{i}}$};
          \draw (0,1) -- ++(9, 0);
          \node (t) at (-1, 0) {\Large$t$};
          \draw (0,0) -- ++(9, 0);
          \node (u) at (-1, -1) {\Large$u$};
          \draw (0,-1) -- ++(9, 0);
        \end{tikzpicture}
      }
      \caption{The gadget for simulating BPs}
      \label{fig:bp-to-cc}
    \end{center}
  \end{subfigure}
\end{figure}

It is easy to see that this reduction can be implemented in $\NC^1$.
\end{proof}
Since the construction in Theorem~\ref{thm:CC} preserves skewness of
the circuit, we have the following corollary.
\begin{corollary}
Let $L$ be any distributive lattice and let $a$ be any element in $L$, then $\PskewCC{L}{a} = \L$.
\end{corollary}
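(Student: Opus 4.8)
The plan is to establish both inclusions by reusing the two simulations already in hand and checking at each step that skewness is preserved; distributivity of $L$ is used only through Theorem~\ref{thm:birkhoff}.

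For $\PskewCC{L}{a} \subseteq \L$, I would start from a skew comparator circuit over $L$ with accepting element $a$ and apply the Birkhoff-based simulation from the proof of Theorem~\ref{thm:CC}. By Theorem~\ref{thm:birkhoff} each element of $L$ is a subset of $[k]$ with meet/join equal to intersection/union, so each $L$-valued line is replaced by $k$ Boolean lines holding its characteristic vector and each gate $(w,x)$ by the $k$ gates $(w_1,x_1),\ldots,(w_k,x_k)$. The point flagged in the remark preceding the corollary is that this preserves skewness: if one input of a used AND gate in the original circuit is a constant element of $L$ (or an input-dependent value $g(x_j)$), then each of its $k$ componentwise copies has a constant bit (or an annotated input bit) as one input and is therefore skew, so the simulated Boolean comparator circuit is skew. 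To dispose of the accepting element, note that the $k$ output lines now carry the characteristic vector of the output value, and the original circuit accepts iff this vector equals the fixed pattern of $a$. Each of these $k$ output bits can be read off by a \skewccvp\ query (designate that line as the destination), which lies in $\L$ by the $(\subseteq)$ direction of Theorem~\ref{thm:L} via the reduction to \dgapp; comparing a constant number of such bits against a fixed pattern (using that $\L$ is closed under complement to test the bits that must be $0$) keeps the whole decision in $\L$.

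For the reverse inclusion $\L = \skewcc \subseteq \PskewCC{L}{a}$, I would simulate the Boolean computation inside $L$ on a two-element chain containing $a$. Pick a second element $c$ comparable to and distinct from $a$, namely $c = 1_L$ if $a \neq 1_L$ and $c = 0_L$ otherwise; then $\{a,c\}$ is a sublattice isomorphic to the Boolean lattice, and I relabel the Boolean values so that the accepting value maps to $a$. If $a$ is the top of this chain I simulate a skew Boolean circuit for the target language $M$, and if $a$ is the bottom I instead simulate one for $\overline{M}$; both exist because $\skewcc = \L$ and $\L$ is closed under complement. Replacing the Boolean constants $0,1$ by the two chain elements leaves every gate skew, since constants and input bits map to constants and input bits, so the output line carries $a$ exactly on the accepted inputs.

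The only genuine thing to verify is that the componentwise Birkhoff simulation does not destroy skewness and that changing the accepting element costs nothing. I expect the first to be immediate once one observes that a constant or input-bit input decomposes into constant or input-bit inputs on all $k$ coordinates, and the second to be handled entirely by the two-chain-plus-complement argument above; combining the two inclusions then yields $\PskewCC{L}{a} = \L$.
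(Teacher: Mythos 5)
Your proposal is correct and takes essentially the same route as the paper: the paper's one-line proof simply observes that the Birkhoff componentwise simulation of Theorem~\ref{thm:CC} preserves skewness, which combined with $\skewcc = \L$ (Theorem~\ref{thm:L}) is exactly your two inclusions. The details you fill in beyond that one line --- the constant-or-literal decomposition of skew gate inputs across the $k$ coordinates, the constant number of \skewccvp\ queries against the characteristic vector of $a$, and the two-element chain with complementation to handle an arbitrary accepting element --- are all consistent with what the paper leaves implicit.
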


We now look at skewed comparator circuits over arbitrary lattices and
show that they also characterize the class \P{}. We prove this by
modifying the proof of Theorem~\ref{thm:P}. More specifically, we show
that by changing the underlying lattice, we can simulate any AND gate
using OR gates and skewed AND gates.

\begin{theorem}
There exists an $i$ such that $\XSkewCC{\Pi_i} = \P$.
\label{thm:SkewP}
\end{theorem}
\begin{proof}
  We will start with the comparator circuit in the proof of
  Theorem~\ref{thm:P} and show how to replace AND gates in that
  circuit with OR gates and skewed AND gates. We start with the poset
  shown in Figure~\ref{fig:P-poset}. We then add new elements $q$,
  $r$, and $v$ to the poset that satisfies the following relations.

  \begin{align*}
    q \wedge 0 &= 0^q\\
    q \wedge 1 &= 1^q\\
    r \vee 0^q &= 0^r\\
    r \vee 1^q &= 1^r\\
    v \wedge 0^r &= 0+\\
    v \wedge 1^r &= 1+
  \end{align*}

  Here, the elements $0+$ and $1+$ can be thought of as placeholders
  for $0$ and $1$ respectively. We then add four new elements to the
  poset $(a, b)+$ where $a, b \in \{0, 1\}$ satisfying ${a+} \vee b =
  (a, b)+$. Then we introduce new elements $s$, $t$, and $u$ such that

  \begin{align*}
    s \wedge (1, 1)+ &= 1^s\\
    s \wedge (a, b)+ &= 0^s\text{, otherwise}\\
    t \vee 0^s &= 1^t\\
    t \vee 1^s &= 0^t\\
    u \wedge 0^t &= 0\\
    u \wedge 1^t &= 1
  \end{align*}

  Now given an AND gate computing $x \land y \in \{0, 1\}$ in the circuit in
  the proof of Theorem~\ref{thm:P} (Note that the non-skew AND gates
  in that circuit always take input from $\{0, 1\}$). We replace that
  AND gate with the following sequence of operations. First we compute
  $((x \wedge q) \vee r) \wedge v$ to yield $x+$. We then OR the wires
  containing $x+$ and $y$ (This is the only non-skew gate used in this
  construction) to yield $(x, y)+$. Finally, we compute $(({(x, y)+}
  \wedge\ s) \vee t) \wedge u$ to yield the required value $x\wedge
  y$. Note that all AND gates used in this construction are
  skewed. The complete set of relations added to the poset in
  Figure~\ref{fig:P-poset} is listed in Figure~\ref{fig:P-skew-poset}.

  We use the same argument as in the proof of Theorem~\ref{thm:P} to
  show that this can be simulated in a partition lattice irrespective
  of the accepting element.
\end{proof}

\begin{figure}[ht]
\begin{align*}
  t &\leq 0^{t} & 0^{t} &\leq 1^{t} & 0^{s} &\leq 0^{t} & 0^{s} &\leq 1^{s} & 0^{s} &\leq {(0,0)+}\\
  1^{s} &\leq 1^{t} & 1^{s} &\leq s & 1^{s} &\leq {(1,1)+} & {(0, 1)+} &\leq {(1,1)+} & {(1,0)+} &\leq {(1,1)+}\\
  {(0,0)+} &\leq {(0,1)+} & {(0,0)+} &\leq {(1,0)+} & {0+} &\leq {(0,0)+} & {0+} &\leq {1+} & {0+} &\leq 0^{r}\\
  {1+} &\leq {(0,1)+} & {1+} &\leq v & {1+} &\leq 1^{r} & 0^{r} &\leq 1^{r} & r &\leq 0^{r}\\
  {0^{q}} &\leq {0^{r}} & {0^{q}} &\leq 1^{q} & {0^{q}} &\leq 0 & 1^{q} &\leq 1^{r} & 1^{q} &\leq q\\
  {1^{q}} &\leq 1 & 1 &\leq u & 1 &\leq 1^{t} & 1 &\leq {(1,0)+} & 0 &\leq 0^{t}\\
  0 &\leq {(0,0)+}
\end{align*}
  \caption{Relations added to the poset in Figure~\ref{fig:P-poset} to make the circuit skewed}
  \label{fig:P-skew-poset}
\end{figure}

\section{Formulae over Lattices}

It is well known that languages decided by poly-size formulae is the
class $\NC^{1}$. By definition, the class $\NC^{1}$ is also the class
of languages decided by log-depth Boolean circuits with bounded fan-in
AND and OR gates.  We can modify Definition~\ref{def:comp} to define
formulae over finite bounded posets. We denote by \formula{L}{a},
where $L$ is a lattice and $a \in L$, the class of all languages
decided by poly-size formulae over $L$ using $a$ as the accepting
element. In this section, we show that the languages computed by
poly-size formulae over any fixed finite lattice is the class
$\NC^{1}$. The proof for the Boolean case is by \cite{Spira71} and it
works by depth reducing an arbitrary formula of poly-size to a Boolean
formula of poly size and log depth. The depth reduction is done by
identifying a separator vertex in the tree and then evaluating the
separated components (which are smaller circuits) in parallel. We show
that a similar argument can be extended to the case of finite lattices
as well. Our main theorem in this section is the following.

\begin{theorem}
  Let $L$ be any finite lattice and let $a$ be an arbitrary element in
  $L$. We have $\formula{L}{a} = \NC^{1}$.
  \label{thm:NC1}
\end{theorem}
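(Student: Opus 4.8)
The plan is to prove both inclusions, with the interesting direction being $\formula{L}{a} \subseteq \NC^1$, which amounts to a depth-reduction argument in the spirit of Spira's theorem adapted to lattice-valued formulae. For the easy direction $\NC^1 \subseteq \formula{L}{a}$: since $L$ has at least two elements, I would pick $0$ and some $a > 0$ to simulate the Boolean lattice (exactly as observed for comparator circuits earlier in the excerpt), so any $\NC^1$ formula over $\{\wedge,\vee\}$ with NOT gates at the inputs translates directly into a formula over $L$ of the same (log) depth, with NOT gates folded into the input-labelling function $g$ from Definition~\ref{def:comp}. Constant blow-up at worst preserves both poly-size and log-depth.

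For the hard direction, I would run the standard Spira balancing argument, treating the formula as a binary tree $T$ of $\wedge/\vee$ gates. The key structural fact is that a tree on $n$ leaves has a vertex $v$ whose removal splits $T$ into pieces each of size at most $2n/3$. I would pick such a separator gate $g_v$ at subtree rooted $v$, and express the value of the whole formula as a function of the value computed at $v$. The subtlety over a general lattice is that the "outer" formula, with a hole where the subtree at $v$ was, is not literally a Boolean function of one bit but an \emph{order-preserving map} $\phi : L \to L$ determined by the path from $v$ to the root (each ancestor gate applies a $\wedge$ or $\vee$ with the value of its sibling subtree). Because $L$ is a \emph{fixed finite} lattice, there are only finitely many such maps $\phi$ (at most $|L|^{|L|}$), so each can be described by a constant amount of information, and each is itself computable by a constant-size formula over $L$ once the sibling values along the path are known.

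The recursion would then be: compute, in parallel, the value at the separator $v$ (a formula of size $\le 2n/3$) and the values of the sibling subtrees hanging off the $v$-to-root path (whose \emph{total} size is $\le 2n/3$, so they balance), then combine them through the constant-size map $\phi$. Writing $D(n)$ for the depth of the resulting balanced formula, this yields a recurrence $D(n) \le D(2n/3) + O(1)$, giving $D(n) = O(\log n)$, while the size stays polynomial because each balancing step at most squares the relevant constant factors. The accepting condition "output $= a$" is then checked by a constant-size gadget at the root, for which I can appeal to the formulae $\dist{a}{B}$ and $\checkgeb{a}$ constructed earlier (Propositions in Section~\ref{sec:prelims}) to convert equality-with-$a$ into a single Boolean $0/1$ output decided at depth $O(\log n)$.

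The main obstacle I anticipate is bookkeeping the sibling values along the separator path: unlike the Boolean case where one recurses on "the formula with value $0$" and "with value $1$" plugged in, here the outer context is a composition of $O(\log n)$ meet/join operations each needing its sibling's value, and I must argue these siblings can be evaluated in parallel within the size/depth budget. The clean way to handle this is to observe that the composed map $\phi$ collapses to a single element of the finite monoid of order-preserving self-maps of $L$ under composition, so rather than tracking the whole path I only track which of the finitely many maps $\phi$ arises; this keeps the combine step constant-size and makes the depth recurrence go through. I would close by noting the construction is $\L$-uniform since separator-finding and the finite map tables are all logspace-computable.
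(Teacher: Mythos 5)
Your overall strategy---Spira's separator plus the observation that the outer context of the separator is a monotone self-map of the fixed finite lattice $L$---is genuinely different from the paper's, and it can be made to work, but as written it has a real gap in the depth analysis: the combine step is not constant-depth. The map $\phi$ is not known when the circuit is built; it is determined by the run-time values of the sibling subtrees along the $v$-to-root path, and in the original (unbalanced) formula that path can have length $\Theta(n)$, not $O(\log n)$ as you state. So after evaluating the siblings you must still compute the composition $\phi = \phi_k \circ \cdots \circ \phi_1$ of $k = \Theta(n)$ generators of the monoid of monotone self-maps, and only then apply it to the value at $v$. Observing that $\phi$ ``collapses to a single element of a finite monoid'' does not help: determining \emph{which} element it collapses to is exactly an iterated product of $\Theta(n)$ monoid elements, and any bounded fan-in circuit whose output depends on all $k$ sibling values has depth $\Omega(\log k)$ (indeed, iterated product in this monoid can even be $\NC^1$-hard, since the monotone bijections of $L$ form its automorphism group, which need not be solvable). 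Your recurrence is therefore $D(n) \le D(2n/3) + \Theta(\log n)$, giving only $O(\log^2 n)$ depth; the claimed $D(n) \le D(2n/3) + O(1)$ is unjustified. The standard repair is a Brent-style \emph{mutual} recursion: introduce ``evaluate a context to a monoid element'' as a second recursive problem, split a context at a suitable node of its hole-to-root path into two sub-contexts plus one sibling formula, and compose only \emph{two} monoid elements per combine step; this restores $D(n) \le D(2n/3) + O(1)$.

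The paper avoids the monoid entirely by generalizing the Boolean multiplexer directly: for each constant $a_i \in \Pi_i$ it recursively balances the formula $F_1^{v}$ obtained by plugging $a_i$ into the hole at $v$, recursively balances the subformula $F_2^{v}$ rooted at $v$, and selects via $\bigvee_{a_i} \left( F_1^{v} \wedge \checkgeb{a_i}(F_2^{v}) \right)$, using monotonicity of meet and join to argue that this join equals the correct output value. Since a general lattice need not admit formulas computing exact $0$/$1$ indicators, the paper first embeds $L$ into a partition lattice $\Pi_i$ (Theorem~\ref{thm:pudlak}), where the indicator formulas $\checkgeb{a_i}$ were constructed. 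This also exposes a secondary slip in your write-up: the formulas $\dist{a}{B}$ and $\checkgeb{a}$ you invoke at the root exist only for partition lattices, so you too would need the embedding step---or, more simply, do the final ``output $= a$'' test by a constant-size Boolean circuit on binary encodings, which suffices since your target is an $\NC^1$ circuit rather than a formula over $L$.
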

\begin{proof}
  ($\supseteq$) Any lattice with at least 2 elements contains the
  0--1 lattice as a sublattice. Also since $\NC^{1}$ is closed under
  complementation, the class does not change even if the acceptor is
  0.

  ($\subseteq$) Let $F$ be a poly-size formula family over $L$. Let
  $i$ be such that $L$ can be embedded in $\Pi_{i}$. Let $F'$ be the
  formula family over $\Pi_{i}$ that corresponds to $F$. We will now
  construct a log-depth poly-size formula family $F''$ that computes
  the same language as $F'$. We will use $F'$ to denote a formula in
  the family $F'$. Let $v$ be the tree separator of the tree
  corresponding to $F'$. For each $a_{i} \in \Pi_{i}$, we will
  construct two formulae. The first one, say $F^{v}_{1}$, computes the
  value at the root of $F'$ assuming that value at $v$ is $a_{i}$ and
  the other, say $F^{v}_{2}$ computes the value at the node $v$ and
  applies $\checkgeb{a_{i}}$ (See Proposition~\ref{prop:checkgeb}) on
  that value. Then we compute the sub-formula $F^{v}_{1} \wedge
  F^{v}_{2}$. After that we take the lub over all such sub-formulae
  (one for each $a_{i}$). This construction is applied recursively on
  $F^{v}_{1}$ and $F^{v}_{2}$ to obtain a log-depth poly-size formula
  equivalent to $F$.

  Suppose the correct value of the sub-formula of $F'$ rooted at $v$ is
  $a_{i}$. Then the only sub-formulae $F_{1}^{v} \wedge F_{2}^{v}$
  outputting a non-zero value are the ones corresponding to $a_{j} \leq
  a_{i}$.  The non-zero value output by such a sub-formula is $b_{j}$,
  the value obtained at the root when the value of $v$ is $a_{j}$. But
  we know that $b_{i}$, the actual value of the original formula is
  greater than or equal to the value $b_{j}$ of any sub-formula by
  monotonicity of lub and glb. So the topmost lub will always output
  the correct value $b_{i}$.

  The final formula is log-depth, poly-size since the formulae
  $\checkgeb{a}$ have constant depth. Now we can construct an $\NC^{1}$
  circuit from $F''$ by encoding each element in $\Pi_{i}$ in binary
  and replacing each gate in $F''$ by constant-sized circuits
  computing the lub and glb over $\Pi_{i}$.
\end{proof}

\section{Discussion and Conclusion}
We studied the computational power of comparator circuits over bounded
posets. We provide alternative characterizations of $\P$, $\L$, $\NL$
and $\NP$ in terms of comparator circuits.

A natural open problem that comes out of our approach is about a
possible dichotomy between $\P$ and $\CC$ with respect to lattice
structure. More concretely, can we design comparator circuits over
fixed lattices $M_3$ or $N_5$ (or powers of it) for all languages in
$\P$?  Noting that existence of $M_3$ or $N_5$ as a sublattice is a
necessary and sufficient condition for non-distributivity (by the
$M_3$-$N_5$ theorem~\cite{lattice-text}), if we manage to show that
$\XCC{M_3} = \PCC{N_5}{a} = \P$ for any $a \in N_5$, this will show a
dichotomy between $\P$ and $\CC$.

In the context of $\NL$ vs $\L$, there are two open problems. Firstly,
it will also be interesting to see if a dichotomy theorem holds, with
respect to the lattice structure. Secondly, we note that the upper
bound of $\NL$ for the case of skew comparator circuits over finite
lattices, uses the embeddability into partition lattices. The power of
skew comparator circuits over finite bounded posets is unclear. It is
not even clear whether they compute only languages in $\P$.

Cook et al.~\cite{Cook12} proposed the question whether membership
testing for CFLs is in \CC. Our characterization of \CC\ in terms of
distributive lattices leads to a concrete approach towards proving
this. Namely, designing a lattice to decide membership testing for
CFLs and showing that this lattice is distributive.

\paragraph{Acknowledgments: } We thank the anonymous reviewers for
their constructive comments, which helped us improve the paper. In particular, we thank the reviewer who pointed out an error in the proof of earlier Theorem~\ref{thm:SkewP} (where we had erroneously claimed that there exists an $i$, $\XSkewCC{\Pi_i} = \NL$). The reviewer also had outlined an argument the details of which we have incoroporated in this version as the proof of Theorem~\ref{thm:SkewP}.

\bibliographystyle{plain}
\bibliography{lattice}

\newpage
\appendix

\section{Comparator Circuits over Growing Lattices}
\label{app:P-Growing}
We can generalize the comparator circuit model even further by
allowing it to compute over lattices that grow with the size of the
input. If the size of the lattice is polynomial in the size of the
input and if the lattice can be computed by the uniformity machine,
then the languages computed by these circuits are in the class
\P. However, since we have the freedom to change the lattice according
to the size of the input, we may be able to capture the class
\P\ using structurally simpler lattices. It is conceivable that the
class \P\ could be captured by a family of distributive lattices,
while no finite lattice capturing \P\ can be distributive.

In this section, we present a formal definition of comparator circuits
over growing posets and then present a lattice family that captures the
class \P. Then we will show that, even for this simpler lattice, an
embedding to a family of distributive lattices is not possible
(Similar to Theorem~\ref{thm:impossibility}).

\begin{definition}[Comparator Circuits over Growing Bounded Posets]
  A comparator circuit family over a growing bounded poset family
  $\poset{P} = \{P_{n}\}$ with accepting set $\poset{A} = \{A_{n}\}$
  where $A_{n} \subseteq P_{n}$ is a family of circuits $\family{C} =
  {\{ \circuit{C_n} \}}_{n \geq 0}$ where $\circuit{C_n}$ $=$ $(W, G,
  f)$ where $f : W \mapsto (\poset{P_{n}} \setunion \{ (i, g) : 1 \leq i
  \leq n \textrm{ and } g:\Sigma \mapsto \poset{P_{n}} \})$ is a
  comparator circuit. Here $W = \{ w_1, \ldots , w_m \}$ is a set of
  lines and $G$ is an ordered list of gates $(w_i, w_j)$.

  On input $x \in \Sigma^n$, we define the output of the comparator
  circuit $\circuit{C_n}$ as follows.  Each line is initially assigned
  a value according to $f$ as follows. We denote the value of the line
  $w_i$ by $val(w_i)$.  If $f(w) \in \poset{P_{n}}$, then the value is
  the element $f(w)$.  Otherwise $f(w) = (i, g)$ and the initial value
  is given by $g(x_i)$.  A gate $(w_i, w_j)$ (non-deterministically)
  updates the value of the line $w_i$ into $val(w_i) \wedge val(w_j)$
  and the value of the line $w_j$ into $val(w_i) \vee val(w_j)$. The
  values of lines are updated by each gate in $G$ in order and the
  circuit accepts $x$ iff $val(w) = a \in A_{n}$ at the end of the
  computation for some sequence of non-deterministic choices.

  Let $\Sigma$ be any finite alphabet. A comparator circuit family
  \family{C} over a growing bounded poset family \poset{P_n} with an
  accepting $A_{n} \subseteq \poset{P_{n}}$ decides $\lang{L}
  \subseteq \Sigma^*$ iff $\circuit{C_{|x|}}$ correctly decides
  whether $x \in \lang{L}$ for all $x \in \Sigma^*$.

  The circuit family is called \P-uniform if there exists a TM that
  given $1^{n}$ as input runs in \poly(n) time and outputs $P_{n}$,
  $A_{n}$ and $C_{n}$.
\label{def:comp-growing}
\end{definition}

First, we show a lattice family that captures \P.

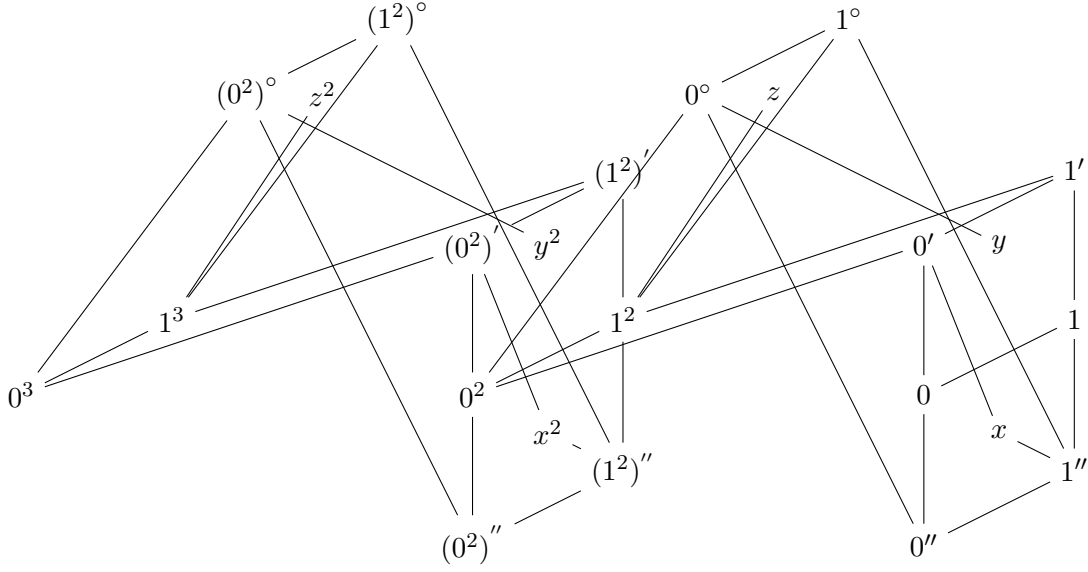
\begin{figure}
\centering
\begin{tikzpicture}
 \node (0)  at (0, 0) {0};
 \node (1)  at (2, 1) {1};
 \node (0pp)at (0, -2) {$0''$};
 \node (1pp) at (2, -1) {$1''$};
 \node (0p) at (0, 2) {$0'$};
 \node (1p) at (2, 3) {$1'$};
 \node (0c) at (-3, 4) {$0^\circ$};
 \node (1c) at (-1, 5) {$1^\circ$};
 \node (x) at (1, -0.5) {$x$};
 \node (y) at (1, 2) {$y$};
 \node (z) at (-2, 4) {$z$};
 
 \node (00)  at (-6, 0) {$0^2$};
 \node (11)  at (-4, 1) {$1^2$};
 \node (00pp)at (-6, -2) {${(0^2)}^{''}$};
 \node (11pp) at (-4, -1) {$(1^2)^{''}$};
 \node (00p) at (-6, 2) {${(0^2)}^{'}$};
 \node (11p) at (-4, 3) {${(1^2)}^{'}$};
 \node (00c) at (-9, 4) {${(0^2)}^\circ$};
 \node (11c) at (-7, 5) {${(1^2)}^\circ$};
 \node (xx) at (-5, -0.5) {$x^2$};
 \node (yy) at (-5, 2) {$y^2$};
 \node (zz) at (-8, 4) {$z^2$};
 
 \node (000) at (-12, 0) {$0^3$};
 \node (111) at (-10, 1) {$1^3$};
 
 \draw (0) -- (1);
 \draw (0p) -- (1p);
 \draw (0pp) -- (1pp);
 \draw (0c) -- (1c);
 \draw (1pp) -- (1);
 \draw (1) -- (1p);
 \draw (0pp) -- (0);
 \draw (0) -- (0p);
 \draw (0pp) -- (0c);
 \draw (1pp) -- (1c);
 \draw (1pp) -- (x);
 \draw (x) -- (0p);
 \draw (y) -- (0c);
 
 \draw (00) -- (0c);
 \draw (00) -- (0p);
 \draw (11) -- (1c);
 \draw (11) -- (1p);
 \draw (11) -- (z);
 
 \draw (00) -- (11);
 \draw (00p) -- (11p);
 \draw (00pp) -- (11pp);
 \draw (00c) -- (11c);
 \draw (11pp) -- (11);
 \draw (11) -- (11p);
 \draw (00pp) -- (00);
 \draw (00) -- (00p);
 \draw (00pp) -- (00c);
 \draw (11pp) -- (11c);
 \draw (11pp) -- (xx);
 \draw (xx) -- (00p);
 \draw (yy) -- (00c);
 
 \draw (000) -- (00c);
 \draw (000) -- (00p);
 \draw (111) -- (11c);
 \draw (111) -- (11p);
 \draw (111) -- (zz); 
 
 \draw (000) -- (111);

\end{tikzpicture}
\caption{A growing poset family for simulating \P}
\label{fig:P-growing}
\end{figure}

\begin{theorem}
  The comparator circuit family over DM completions for the poset
  family in Figure~\ref{fig:P-growing} captures the class \P.
\end{theorem}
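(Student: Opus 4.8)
The plan is to prove both inclusions, following the template of Lemma~\ref{lem:P}. For the containment in \P, recall that by Definition~\ref{def:comp-growing} the family is \P-uniform, so it suffices to bound the sizes. The poset family of Figure~\ref{fig:P-growing} contributes only $O(1)$ new elements per ``generation'', and since the reduction below uses at most $\poly(n)$ generations, each $P_n$ and hence its Dedekind--MacNeille completion $L_n$ (Theorem~\ref{thm:dmcompletion}) has size $\poly(n)$ and is computable in polynomial time by the uniformity machine. Evaluating a $\poly(n)$-size comparator circuit over the fixed $\poly(n)$-size lattice $L_n$ costs one meet and one join per gate, hence is polynomial time, exactly as remarked just before Definition~\ref{def:comp-growing}.

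For the reverse containment I would reduce the \P-complete problem $\MCVP$ to the comparator circuit value problem over $L_n$. As in Lemma~\ref{lem:P}, the Boolean values $0$ and $1$ carried by a wire of the monotone circuit are represented by the lattice elements $0^{g}$ and $1^{g}$ of a suitable \emph{generation} $g$; since $0^g \pleq{P} 1^g$, a single comparator gate on two equal-generation lines realizes the Boolean $\wedge$ and $\vee$ (via $0^g \vee 1^g = 1^g$ and $0^g \wedge 1^g = 0^g$). The only nontrivial step is fan-out. Here, instead of reusing one auxiliary gadget as in Lemma~\ref{lem:P}, each copy operation at generation $g$ draws the \emph{fresh} auxiliary elements $x^{g},y^{g},z^{g}$ and uses the meet/join identities read off Figure~\ref{fig:P-growing} --- $0^{g}\vee x^{g}=(0^{g})'$ and $0^{g}\wedge x^{g}=(0^{g})''$, together with their symmetric versions for $1^{g}$, followed by the restoring identities taking the two intermediate copies to the generation-$(g{+}1)$ base values, namely $(0^{g})'\wedge(0^{g})^{\circ}=0^{g+1}$ and $(1^{g})'\wedge(1^{g})^{\circ}=1^{g+1}$ --- so that the gadget emits two lines each carrying $0^{g+1}$ (resp.\ $1^{g+1}$). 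Because the Dedekind--MacNeille completion preserves every meet and join already present in $P_n$, all of these identities hold verbatim in $L_n$.

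To keep the generation count polynomial, I would assign to every wire a generation equal to its depth in the monotone circuit, process the circuit layer by layer, promote any input lagging behind its gate's generation by performing dummy copies (each advancing the generation by one and discarding one output), and fan out into the next layer by the gadget above; the accepting set is $A_n = \{1^{g^\ast}\}$, where $g^\ast$ is the generation of the output wire. Since the depth is at most the circuit size, only $\poly(n)$ generations arise, so $P_n$ is $\poly(n)$-size and the entire reduction is computable in polynomial (indeed logarithmic) space. Correctness then follows by induction over the layers, exactly as in Lemma~\ref{lem:P}.

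The main obstacle I anticipate is the generation bookkeeping: one must verify that the two inputs of every gate can be matched to a common generation, and that both outputs of each copy can be restored to the next generation using only that generation's fresh auxiliaries, all while keeping the number of generations --- and hence $|P_n|$ --- polynomial. The accompanying routine task is to confirm that the local meet/join behaviour of each generation in Figure~\ref{fig:P-growing} coincides with that of the fixed poset of Lemma~\ref{lem:P}, so that correctness transfers, and that these identities survive the passage to the Dedekind--MacNeille completion.
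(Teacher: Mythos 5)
Your overall plan mirrors the paper's proof (simulate a layered monotone circuit level by level, with a fan-out gadget that advances values from generation $g$ to generation $g+1$), but two load-bearing steps are wrong or unsupported. First, the upper bound. You write that $|P_n| = \poly(n)$ and \emph{hence} its Dedekind--MacNeille completion $L_n$ has size $\poly(n)$. That inference is false in general: DM completions can be exponentially larger than the underlying poset (for example, the poset consisting of the atoms and coatoms of the Boolean lattice $2^{[k]}$ has $2k$ elements, but its DM completion is all of $2^{[k]}$). Since \P-uniformity requires the machine to output $L_n$ itself, bounding $|L_n|$ is precisely the nontrivial content of this direction. The paper supplies it with a structural argument specific to Figure~\ref{fig:P-growing}: every cut $(A,B)$ with $A = UP(B)$, $B = DOWN(A)$ satisfies $DOWN(A) = \emptyset$ once $|A| > 11$, so the completion has at most $O(|P_n|^{11})$ elements and is computable in polynomial time. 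Your proof needs such an argument; without it the containment in \P\ is unestablished.

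Second, the fan-out gadget. Your ``restoring identity'' $(0^{g})' \wedge (0^{g})^{\circ} = 0^{g+1}$ is false: in the poset of Figure~\ref{fig:P-growing}, the pair $\{(0^{g})', (0^{g})^{\circ}\}$ has \emph{two} incomparable maximal lower bounds, namely $(0^{g})''$ (note $(0^{g})'' \leq 0^{g} \leq (0^{g})'$ and $(0^{g})'' \leq (0^{g})^{\circ}$) and $0^{g+1}$. So this meet does not exist in $P_n$, and in the DM completion it is a new cut element lying above both of these, not $0^{g+1}$; the same failure occurs for the $1$-version. The gadget is also structurally off: taking a comparator gate between the two copies merges them onto a single meet line, so it cannot ``emit two lines each carrying $0^{g+1}$'' --- each copy must be converted by its own auxiliary lines. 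The identities that actually hold in the figure (and, being genuine unique meets/joins of $P_n$, are preserved by the completion) are of the form $(0^{g})'' \vee y^{g} = (0^{g})^{\circ}$ and then $(0^{g})' \wedge z^{g} = 0^{g+1}$, $(0^{g})^{\circ} \wedge z^{g} = 0^{g+1}$ (similarly $(1^{g})'' \vee y^{g} = (1^{g})^{\circ}$ and $(1^{g})' \wedge z^{g} = (1^{g})^{\circ} \wedge z^{g} = 1^{g+1}$), using one fresh $y^{g}$-line and two fresh $z^{g}$-lines per copy. The ``routine task'' you deferred --- checking that each generation behaves like the poset of Lemma~\ref{lem:P} --- would in fact have exposed this: the growing poset has no analogues of the elements $w$ and $p$, and its conversion identities land in the next generation rather than back in the current one, so correctness does not transfer verbatim from Lemma~\ref{lem:P}.
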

\begin{proof}[Proof Sketch]
  We construct a comparator circuit over the poset family in
  Figure~\ref{fig:P-growing} from a layered circuit with NOT gates
  only at the input level. The elements $0^{i}$ and $1^{i}$ in the
  poset correspond to the logical values 0 and 1 at the $i^{th}$ level
  of the circuit. As in the proof of Lemma~\ref{lem:P}, there is a
  sequence of lubs and glbs that creates two copies of the logical
  value at the $i^{th}$ level and then converts them to the
  corresponding value in the $(i+1)^{th}$ level.

  Define $m = |P_{n}|$. The elements of the DM completion of $P_{n}$
  consists of ordered pairs $(A, B)$ where $A, B \subseteq P_{n}$ and
  $A = UP(B)$ and $B = DOWN(A)$. Here $UP(A)$ ($DOWN(A)$) is the set
  of all elements in the poset that are greater (less) than or equal
  to all elements in $A$. Note that in the poset $P_{n}$, if $|A| >
  11$, then we have $DOWN(A) = \phi = B$ and then we have $A =
  P_{n}$. We claim that the DM completion has at most $O(m^{24})$
  elements. Consider an element $(A, B)$ in the DM completion such
  that $|A| > 11$ or $|B| > 11$. If $|A| > 11$, then we have $B =
  \phi$ and therefore $A = P_n$. Similarly, if $|B| > 11$, then we
  have $A = \phi$ and $B = P_n$. Therefore, all elements $(A, B)$
  except $(\phi, P_n)$ and $(P_n, \phi)$ in the DM completion has $|A|
  \leq 11$ and $|B| \leq 11$. This implies that the DM completion has
  at most $O(m^{24})$ elements. To prove the \P-uniformity of the
  comparator circuit family, we have to prove that the DM completion
  can be computed in polynomial time. There exists an algorithm that
  can compute the DM completion of a poset in time polynomial in the
  number of elements in the DM completion \cite{dm-completion}. Since,
  the number of elements in the DM completion of $P_n$ is polynomial
  in $n$, the \P-uniformity of the comparator circuit family follows.
\end{proof}

Now we prove that even this growing lattice family cannot be embedded
into any distributive lattice.

\begin{theorem}
  The poset in Figure~\ref{fig:P-growing} cannot be embedded in any
  distributive lattice.
\end{theorem}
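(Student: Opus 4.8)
The plan is to mirror the proof of Theorem~\ref{thm:impossibility}. Suppose, for contradiction, that the poset $P$ of Figure~\ref{fig:P-growing} embeds into a finite distributive lattice in a way that preserves all existing meets and joins. By Birkhoff's representation theorem (Theorem~\ref{thm:birkhoff}) I may take the embedding lattice to be a lattice of finite sets with join $=$ union and meet $=$ intersection, so that every element of $P$ acquires a set label. Following the notation of Theorem~\ref{thm:impossibility}, I write $A,B,A',B',A'',B''$ for the labels of $0,1,0',1',0'',1''$, and an uppercase letter ($X,Y,Z$) for each helper $x,y,z$.

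First I would read the copy-block identities off the diagram as set equations. The two copy-making steps $0\vee x=0'$, $0\wedge x=0''$, $1\vee x=1'$, $1\wedge x=1''$ become $A'=A\cup X$, $A''=A\cap X$, $B'=B\cup X$, $B''=B\cap X$, and the conversion-back steps that rebuild the next-level values $0^2,1^2$ out of the two copies (realized through $0^\circ,1^\circ$ and the helpers $y,z$, playing here the roles that $w,p$ play in Lemma~\ref{lem:P}) supply the remaining equations. Reading from the Hasse diagram I also record the order facts $0<1$, $0<0'$ and $1''<x<0'$, which give $A\subsetneq B$, $A\subseteq A'$ and $B''\subseteq A'$. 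The target, exactly as in Theorem~\ref{thm:impossibility}, is to show that every element of $B\setminus A$ must lie in $B''$: once this is done, $B=A\cup(B\setminus A)\subseteq A'\cup B''=A'$, i.e.\ $1\le 0'$ in the completion. Since one checks from Figure~\ref{fig:P-growing} that $0'$ and $1$ are incomparable in $P$ (the up-set of $1$ is $\{1,1',\dots\}$ and the up-set of $0'$ is $\{0',1',\dots\}$, neither containing the other), a meet/join-preserving embedding cannot force $1\le 0'$, and this is the contradiction.

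The heart of the argument, and the step I expect to be the main obstacle, is establishing the forced membership $B\setminus A\subseteq B''$. The copy equations alone are insufficient: under union/intersection labels the relations $A''=A\cap X$, $B''=B\cap X$, $A'=A\cup X$, $B''\subseteq A'$ are all automatically consistent, so no contradiction can come from them. The relations that genuinely bite are the conversion-back identities producing $0^2$ and $1^2$, and I would trace an arbitrary $t\in B\setminus A$ through them to eliminate every alternative location for $t$ and leave $t\in X$ (equivalently $t\in B''$). Because this gadget has fewer helper elements than the poset of Theorem~\ref{thm:impossibility}, the elimination must be carried out using only $x,y,z$ together with $0^\circ,1^\circ,0^2,1^2$; getting these conversion identities exactly right and verifying that the leaner structure still pins $t$ into $X$ is where the real care is needed. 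The remainder is the routine Birkhoff bookkeeping already used in Theorem~\ref{thm:impossibility}, and since each $P_n$ in the family contains this block verbatim and the argument is entirely local to one block, the same contradiction rules out a meet/join-preserving distributive completion for every member of the family.
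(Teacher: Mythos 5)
There is a genuine gap, and it sits exactly where you flag it yourself. Your setup (Birkhoff labels, the copy identities $A'=A\cup X$, $A''=A\cap X$, $B'=B\cup X$, $B''=B\cap X$, and the order facts $A\subsetneq B$, $B''\subseteq A'$) agrees with the paper, but the ``heart of the argument'' is only announced, not carried out, and the route you sketch for it would not work. You propose to take an arbitrary $t\in B\setminus A$ and use the conversion identities to pin $t$ into $X$, concluding $B\subseteq A'$ and contradicting the incomparability of $1$ and $0'$. The problem is that in the poset of Figure~\ref{fig:P-growing} the conversion identities do not say anything about elements of $B$: they are $0^2 = 0'\wedge z = (0''\vee y)\wedge z$ and $1^2=(1''\vee y)\wedge z$, i.e.\ $A^2 = A'\cap Z = (A''\cup Y)\cap Z$ and $B^2=(B''\cup Y)\cap Z$, and these constrain only elements lying in $Z$. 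In the proof of Theorem~\ref{thm:impossibility} the trace of $t\in B\setminus A$ got started because $1<w$ there, so every element of $B$ lies in $W$ and the identities $A=(A''\cup P)\cap W$, $B=(B''\cup P)\cap W$ decompose membership of each such $t$. The growing poset deliberately has no such element: the conversion lands at the \emph{next level} ($0^2,1^2$) rather than back at $0,1$, and $1$ and $z$ are incomparable with no common lower bound forcing $B$ into $Z$. So an element $t\in B\setminus A$ outside $Z$ is simply untouched by every identity you have, and the elimination cannot begin; the intermediate claim $B\setminus A\subseteq B''$ is not derivable from the relations you plan to use.

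The paper's proof sketch avoids this by choosing the witness one level down: since $0^2<1^2$ strictly, pick $t\in B^2\setminus A^2$. Such a $t$ is \emph{automatically} in $Z$, because $B^2=(B''\cup Y)\cap Z$, and also $t\notin A'=A'\cap Z\cup\cdots$, more precisely $t\notin A'$ since $t\in Z$ and $t\notin A^2=A'\cap Z$. Then a two-case check finishes: if $t\in Y$ then $t\in(A''\cup Y)\cap Z=A^2$, contradiction; if $t\notin Y$ then $t\in B''\subseteq A'$, so $t\in A'\cap Z=A^2$, again a contradiction. Note also that this needs only \emph{one} witness element, whereas your plan needs to control \emph{every} element of $B\setminus A$. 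So the correct locus of the contradiction is the interface between two consecutive levels of the growing poset, not a single level-1 block as your proposal assumes; with that change the rest of your outline (Birkhoff bookkeeping, locality, uniformity over the family) goes through as in the paper.
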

\begin{proof}[Proof Sketch]
  The proof is similar to the proof of
  Theorem~\ref{thm:impossibility}. We use the same labelling
  used in the proof of Theorem~\ref{thm:impossibility}.

  We have $A^{2} = (A'' \cup Y) \cap Z = A^{\circ} \cap Z$ and $B^{2} = (B''
  \cup Y) \cap Z$. Since $B^{2} \supset A^{2}$, we have $x \in
  B^{2}\backslash A^{2}$. So $x \in Z$ and $x \in (B'' \cup Y)\backslash A'$. Now if $x
  \in Y$, then $x \in A'' \cup Y$ and so $x \in A^{2}$. But if $x
  \notin Y$, then $x \in B''$ which implies $x \in A'$ which in turn
  implies $x \in A^{2}$. A contradiction.
\end{proof}

\end{document}